\newtheorem{theorem}{Theorem}
\newtheorem{lemma}{Lemma}
\newtheorem{definition}{Definition}
\newcommand{\myparatight}[1]{\noindent{\bf {#1}:}~}
\newenvironment{packeditemize}{\begin{list}{$\bullet$}{\setlength{\itemsep}{1pt}\addtolength{\labelwidth}{20pt}\setlength{\leftmargin}{\labelwidth}\setlength{\listparindent}{\parindent}\setlength{\parsep}{0pt}\setlength{\topsep}{0pt}}}{\end{list}}
\begin{document}
%
\title{Structure-based Sybil Detection in Social Networks via Local Rule-based Propagation}
%
%
%
%

\author{Binghui~Wang,~\IEEEmembership{Student~Member,~IEEE,}
        Jinyuan~Jia,~\IEEEmembership{Student~Member,~IEEE,}
        Le~Zhang,
        and~Neil~Zhenqiang~Gong,~\IEEEmembership{Member,~IEEE}
\IEEEcompsocitemizethanks{\IEEEcompsocthanksitem The authors are with the Department of Electrical 
and Computer Engineering, Iowa State University, Ames, IA, 50010.\protect\\
E-mail: \{binghuiw,jinyuan,lezhang,neilgong\}@iastate.edu.
\IEEEcompsocthanksitem Corresponding author: Neil~Zhenqiang~Gong (neilgong@iastate.edu) 
\IEEEcompsocthanksitem One of the 10 papers [49] fast tracked from INFOCOM'17. 

}
}

\markboth{IEEE Transactions on Network Science and Engineering, 2019}%
{Shell \MakeLowercase{\textit{et al.}}: Bare Demo of IEEEtran.cls for Computer Society Journals}
%



\IEEEtitleabstractindextext{%

\begin{abstract}
Social networks are known to be vulnerable to the so-called Sybil attack, in which an attacker maintains massive Sybils and uses them to perform various malicious activities. Therefore, Sybil detection in social networks is a basic security research problem.
 Structure-based methods have been shown to be promising at detecting Sybils. Existing  structure-based methods can be classified into two categories: Random Walk (RW)-based methods and Loop Belief Propagation (LBP)-based methods. 
RW-based methods cannot leverage labeled Sybils and labeled benign users simultaneously, which limits their detection accuracy, and/or  they are not robust to noisy labels. LBP-based methods  are not scalable, and they cannot guarantee convergence. 
In this work, we propose SybilSCAR, a novel structure-based method to detect Sybils in social networks. 
SybilSCAR is \underline{\emph{S}}calable, \underline{\emph{C}}onvergent, \underline{\emph{A}}ccurate, and \underline{\emph{R}}obust to label noise. 
 We first propose a framework to unify RW-based and LBP-based methods. Under our framework, these methods can be viewed as iteratively applying a (different) \emph{local rule} to every user, which propagates label information among a social graph. Second, we design a new local rule, which SybilSCAR iteratively applies to every user to detect Sybils. 
We compare SybilSCAR with state-of-the-art RW-based methods and LBP-based methods both theoretically and empirically. 
Theoretically, we show that, with proper parameter settings, SybilSCAR has a tighter asymptotical bound on the number of Sybils that are falsely accepted into a social network than existing structure-based methods. 
Empirically, we perform evaluation using both social networks with synthesized Sybils and a large-scale Twitter dataset (41.7M nodes and 1.2B edges) with real Sybils, and our results
 show that 
1) SybilSCAR is substantially more accurate and more robust to label noise than state-of-the-art RW-based methods; and 
2) SybilSCAR is more accurate and one order of magnitude more scalable than state-of-the-art LBP-based methods.
\end{abstract}

\begin{IEEEkeywords}
Social networks, Sybil detection.
\end{IEEEkeywords}}

\maketitle

\IEEEdisplaynontitleabstractindextext

%
\IEEEpeerreviewmaketitle

\IEEEraisesectionheading{\section{Introduction}\label{sec:introduction}}

\IEEEPARstart{S}{ocial} networks are becoming more and more important and essential platforms for people to interact with each other, process information, and diffuse social influence, etc. 
For example, Facebook was reported to have 1.65 billion monthly active users as of April 2016~\cite{gao2012towards}, and it has become the third most visited website worldwide, just next to Google.com and YouTube.com, according to Alexa~\cite{alexaStat}.
However, it is well known that social networks are vulnerable to \emph{Sybil attacks}, in which attackers maintain a large number of Sybils, e.g., spammers, fake users, and compromised normal users.
For instance, 10\% of Twitter users were fake~\cite{Twittersybil}. 
Adversaries can leverage such Sybils to perform various malicious activities such as disrupting democratic election~\cite{election},  influencing financial market~\cite{stock}, distributing spams and phishing attacks~\cite{Thomas11}, as well as harvesting private user data~\cite{Bilge09}. 
Therefore, Sybil detection in social networks is an urgent research problem.

Indeed, this research problem has attracted increasing attention from multiple research communities including networking, security, and data mining. 
Among various methods, structure-based methods have demonstrated promising results, e.g., SybilRank~\cite{sybilrank} was deployed to detect a large amount of Sybils in Tuenti, the largest online social network in Spain. Most structure-based methods~\cite{Yu06,Yu08,Danezis09,Mohaisen11,sybilrank,Yang12-spam,integro,smartwalk,SybilWalk,zhang2016truetop,sybilbelief,sybilframe,robustspammer,gang} can be grouped into two categories: Random Walk (RW)-based methods and Loop Belief Propagation (LBP)-based methods. Given a training dataset, these methods iteratively propagate label information among the social graph to predict labels for users. RW-based methods implement the propagation using random walks, while  LBP-based methods implement the propagation using Loopy Belief Propagation~\cite{Pearl88}. 
RW-based methods~\cite{Yu06,Yu08,Danezis09,Mohaisen11,sybilrank,Yang12-spam,integro,smartwalk,SybilWalk} suffer from one or two major limitations: 1) they can only leverage either labeled benign users or labeled Sybils in the training dataset, but not both, which limits their detection accuracies;
and 2) they are not robust to label noise in the training dataset. The label of a user is noisy if the label is incorrect.  Label noise often exists in practice due to human mistakes when manually labeling users~\cite{Thomas11,wang2012social}. 
LBP-based methods~\cite{sybilbelief,sybilframe,robustspammer}  suffer from three major limitations: 
 1) they cannot guarantee convergence on real-world social networks; 2) they are not scalable; and 3) they do not have theoretically guaranteed performance.  The first limitation makes LBP-based methods sensitive to the number of iterations that the methods run. 

\myparatight{Our work} 
We propose a novel structure-based method, called SybilSCAR, to perform Sybil detection in social networks. 
SybilSCAR combines the advantages of RW-based methods and LBP-based methods, while overcoming their limitations. 
Specifically, SybilSCAR is \underline{\emph{S}}calable, \underline{\emph{C}}onvergent, \underline{\emph{A}}ccurate, and \underline{\emph{R}}obust to label noise. 

First, we propose a general framework to unify state-of-the-art RW-based and LBP-based methods. Under our framework, each structure-based method can be viewed as iteratively applying a \emph{local rule} to every node, which propagates label information from the training dataset to other nodes in the social network. A local rule updates a node's label information via combining the node's neighbors' label information and the prior knowledge that we know about the node. Although RW-based methods and LBP-based methods use very different mathematical foundations (i.e., RW vs. LBP), they can be viewed as applying different local rules under our framework. Our framework makes it possible to compare different methods in a unified way. Moreover, our framework provides new insights on how to design better structure-based methods. Specifically, designing better  structure-based methods reduces to designing better local rules. 

Second, we design a novel local rule that integrates the advantages of both RW-based methods and LBP-based methods, while overcoming their limitations. SybilSCAR iteratively applies our local rule to every user. 
Our local rule, like RW-based methods and LBP-based methods, leverages the \emph{homophily property} of social networks. 
Homophily means that two linked users share the same label with a high probability.
In our local rule, we associate a weight with each edge, which represents the probability that the two corresponding users have the same label. For a neighbor $v$ of $u$, our local rule models $v$'s influence (we call it \emph{neighbor influence}) to $u$'s label as the probability that $u$ is a Sybil, given $v$'s information alone. Our local rule combines neighbor influences and prior knowledge about a user in a multiplicative way to update knowledge about the user's label. Moreover, we linearize the multiplicative local rule in order to make SybilSCAR convergent.

Third, we evaluate SybilSCAR and compare it with state-of-the-art RW-based methods and LBP-based methods both theoretically and empirically.
Theoretically, we derive a bound on the number of Sybils that are accepted into a social network for SybilSCAR. Our bound is tighter than those of the existing methods. Moreover, we analyze the condition when SybilSCAR is guaranteed to converge. 
Empirically, we compare SybilSCAR with SybilRank~\cite{sybilrank}, a state-of-the-art RW-based method, and SybilBelief~\cite{sybilbelief}, a state-of-the-art LBP-based method, using 1) three real-world social networks with synthesized Sybils and 2) a large-scale Twitter dataset (41.7M users and 1.2B edges) with real Sybils.
Our empirical results demonstrate that 1) SybilSCAR achieves better detection accuracies than  SybilRank and SybilBelief, 2) SybilSCAR is robust to larger label noise than SybilRank, and is as robust as SybilBelief; 3) SybilSCAR is as space and time efficient as SybilRank, but is several times more space efficient and one order of magnitude more time efficient than SybilBelief; 4) SybilSCAR and SybilRank are convergent, but SybilBelief is not. For instance, in the large Twitter dataset, among the top-10K users that are predicted to be most likely Sybils by SybilRank, SybilBelief, and SybilSCAR, 0.33\%, 77.5\%, and 95.8\% of them are real Sybils, respectively.

In summary, our key contributions are as follows:

\begin{packeditemize}
\item We propose SybilSCAR, a novel structure-based methods, to detect Sybils in social networks. 
SybilSCAR is convergent, scalable, robust to label noise, and more accurate than existing methods. 

\item We propose a local rule-based framework to unify state-of-the-art RW-based methods and LBP-based methods. 
Under our framework, we design a novel local rule that is the key component of SybilSCAR.

\item We evaluate SybilSCAR both theoretically and empirically, and compare it with a state-of-the-art RW-based method and a state-of-the-art LBP-based method. 
Our theoretical results show that SybilSCAR has a tighter bound on the number of Sybils that are falsely accepted into a social network than existing methods. 
 Our empirical results on multiple social network datasets demonstrate that SybilSCAR significantly outperforms the state-of-the-art RW-based method in terms of accuracy and robustness to label noise, and that SybilSCAR outperforms the state-of-the-art LBP-based method in terms of accuracy, scalability, and convergence.
\end{packeditemize}

\section{Related Work}
\label{relatedwork}

\subsection{Structure-based Methods}
We classify structure-based methods into Random Walk (RW)-based methods and Loopy Belief Propagation (LBP)-based methods.
Structure-based methods aim to leverage social structure~\cite{ Yu06,Yu08,Danezis09,Mohaisen11,sybilrank,Yang12-spam,integro,zhang2016truetop,smartwalk,SybilWalk,sybilbelief,sybilframe,robustspammer,gang}.
The key intuition is that, although an attacker can control the connections between Sybils arbitrarily, it is harder for the attacker to manipulate the connections between benign nodes and Sybils, because such manipulation requires actions from benign nodes. 
Therefore, benign nodes and Sybils have a structural gap, which is leveraged by RW-based and LBP-based methods. 

\myparatight{RW-based methods} 
Example RW-based based methods include SybilGuard~\cite{Yu06}, SybilLimit~\cite{Yu08}, SybilInfer~\cite{Danezis09}, SybilRank~\cite{sybilrank}, Criminal account Inference Algorithm (CIA)~\cite{Yang12-spam},  \'{I}ntegro~\cite{integro}, and SybilWalk~\cite{SybilWalk}. 
Specifically, SybilGuard~\cite{Yu06} and SybilLimit~\cite{Yu08} assume that it is easy for short random walks starting from a labeled benign user to quickly reach other benign users, while hard for short random walks starting from Sybils to reach benign users.  SybilGuard and SybilLimit use the same RW lengths for all nodes. 
SmartWalk~\cite{smartwalk} leverages machine learning classifiers to predict the appropriate RW length for different nodes, and can improve the performance of SybilLimit via using the predicted (different) RW length for each node. 
SybilInfer~\cite{Danezis09} combines RWs with Bayesian inference and Monte-Carlo sampling  to directly detect the bottleneck cut between benign users and Sybils. 
SybilRank~\cite{sybilrank} uses short RWs to distribute benignness scores from a set of labeled benign users to all the remaining users. CIA~\cite{Yang12-spam} distributes badness scores from a set of labeled Sybils to other users. With a certain probability, CIA restarts the RW from the initial probability distribution, which is assigned based on the set of labeled Sybils.  \'{I}ntegro~\cite{integro} improves SybilRank by first leveraging victim  prediction (a victim is a user that connects to at least one Sybil) to assign weights to edges of a social network and then performing random walks on the weighted social network. 


Existing RW-based methods suffer from one or two key limitations: 1) they can only leverage either labeled benign users or labeled Sybils, but not both, which limits their detection accuracies; and 2) they are not robust to label noise in the training dataset.
Specifically, 
SybilGuard, SybilLimit, SybilInfer, and SmartWalk only leverage one labeled benign node, making their accuracy limited~\cite{sybilrank} and making them sensitive to label noise. Moreover, they are not scalable to large-scale social networks because they need to simulate a large number of random walks. SybilRank was shown to outperform a variety of Sybil detection methods~\cite{sybilrank}, and we treat it as a state-of-the-art RW-based method. 
SybilRank can only leverage the labeled benign users in a training dataset, which limits its detection accuracy, as we will demonstrate in our experiments.  Moreover, SybilRank is not robust to label noise, as we will demonstrate in our experiments. 
\myparatight{LBP-based methods}
LBP-based methods~\cite{sybilbelief,sybilframe,robustspammer,gang} also leverage the structure of the social network.
SybilBelief models a social network as a pairwise Markov Random Field (pMRF). 
Given some labeled Sybils and labeled benign users, SybilBelief first assigns prior probabilities to them and then uses LBP~\cite{Pearl88} to iteratively estimate the posterior probability of being a Sybil for each   remaining user.  The posterior probability of being a Sybil is used to predict a user's label.   SybilBelief can leverage both labeled Sybils and labeled benign users simultaneously, and it is robust to label noise~\cite{sybilbelief}. Gao et al.~\cite{sybilframe} and Fu et al.~\cite{robustspammer} demonstrated that SybilBelief can achieve better performance when learning the node and edge priors using local graph structure analysis. 
However, SybilBelief and its variants suffer from three limitations: 
1) they are not guaranteed to converge because LBP might oscillate on graphs with loops~\cite{Pearl88}; 
2) they are not scalable because LBP requires storing and maintaining messages on each edge; and 3) they do not have theoretically guaranteed performance. 
The first limitation means that their performance heavily relies on the number of iterations that LBP runs, but the best number of iterations might be different for different social networks. We note that Wang et al.~\cite{gang} recently proposed GANG, which generalized SybilBelief to directed social graphs (e.g., Twitter) and extended the techniques proposed in this work to make GANG scalable and convergent. 

\subsection{Other methods}
Some methods detect Sybils via binary machine learning classifiers. In particular, most methods in this direction represent each user using a set of features, which can be extracted from users' local subgraph structure (e.g., ego-network)~\cite{Yang11-sybil, Wang10} and side information (e.g., IP address, behaviors, and content)~\cite{yardi10,LeeUncovering10,benevenuto2010detecting,Song11,facebookImmune,Wang13Clickstream,CaoCCS14,cao2015combating}. Then, given a training dataset consisting of labeled benign users and labeled Sybils, they learn a binary classifier, e.g., logistic regression. Finally, the classifier is used to predict labels for the remaining users. A fundamental limitation of these methods is that attackers can mimic benign users by manipulating their profiles, so as to bypass the detection. However, we believe these methods can still be used to filter the basic Sybils.  
Moreover, these feature-based methods can be further combined with structure-based methods. For instance, for each user, the classifier can produce a probability that the user is a Sybil; such probabilities can be used as prior probabilities in LBP-based methods, e.g., SybilBelief~\cite{sybilbelief}. Indeed,   Gao et al.~\cite{sybilframe} generalized SybilBelief to incorporate such feature-based priors and demonstrated performance improvement.

\section{Problem Definition}
\label{probdef}

We formally define our structure-based Sybil detection problem, introduce our design goals, and describe the threat model we consider in the paper.

\subsection{Structure-based Sybil Detection}

Suppose we are given an undirected social network $G=(V,E)$, where a node $v \in V$ represents a user and an edge $(u,v) \in E$ indicates a mutual relationship between $u$ and $v$. $|V|$ and $|E|$ are number of nodes and edges, respectively.
For instance, on Facebook, an edge $(u,v)$ could mean that $u$ is in $v$'s friend list and vice versa. 
On Twitter, an edge $(u,v)$ could mean that $u$ follows $v$. Our structure-based Sybil detection is defined as follows:

\begin{definition}[Structure-based Sybil Detection] 
Suppose we are given a social network and a training dataset consisting of some labeled Sybils and labeled benign nodes. Structure-based Sybil detection is to predict the label of each remaining node by leveraging the global structure of the social network. 
\end{definition}

Like most existing studies on structure-based Sybil detection, we focus on undirected social networks. However, our methods can be generalized to directed social networks. For instance, Wang et al.~\cite{gang} generalized our methods to design a LBP-based method for directed social networks.

\subsection{Design Goals}
We target a method that satisfies the following goals:

\myparatight{1) Leveraging both labeled benign users and Sybils} Social network service providers often have a set of labeled benign users and labeled Sybils. For instance, verified users on Twitter or Facebook can be treated as labeled benign users; users spreading spam or malware can be treated as labeled Sybils, which can be obtained through manual inspection~\cite{sybilrank} or crowdsourcing~\cite{Wang13}. Our method should be able to leverage both labeled benign users and labeled Sybils to enhance  detection accuracy. 

\myparatight{2) Robust to label noise} A given label of a user is noisy if it does not match the user's true label. Labeled users may have noisy labels. For instance, an adversary could compromise a labeled benign user or 
make a  Sybil whitelisted as a benign user. In addition, labels obtained through manual inspection, especially crowdsourcing, often contain noises due to human mistakes~\cite{Wang13}.  We target a method that is robust when a minority fraction of given labels are incorrect. 

\myparatight{3) Scalable} Real-world social networks often have hundreds of millions of users and edges. Therefore, our method should be scalable and easily parallelizable. 

\myparatight{4) Convergent} Existing methods and our method are iterative methods. 
Convergence makes it easy to determine when to stop an iterative method. 
 It is hard to set the best number of iterations for an iterative method that is not convergent.  
 Therefore, our method should be convergent.

\myparatight{5) Theoretical guarantee} Our method should have a theoretical guarantee on the number of Sybils that can be falsely accepted into a social network. This theoretical guarantee is important for security-critical applications that leverage social networks, e.g., social network based Sybil defense in peer-to-peer and distributed systems~\cite{Yu06}, and social network based anonymous communications~\cite{DanezisAnonyComPET10}. 

Existing RW-based SybilGuard~\cite{Yu06} and SybilLimit~\cite{Yu08} do not satisfy requirements 1), 2), and 3); SybilInfer~\cite{Danezis09} only satisfies the requirement 4); SybilRank~\cite{sybilrank}  and \'{I}ntegro~\cite{integro} do not satisfy requirements 1) and 2); CIA~\cite{Yang12-spam} does not satisfy requirements 1), 2), and 5). Existing LBP-based SybilBelief~\cite{sybilbelief} and SybilFuse~\cite{sybilframe} do not satisfy requirements 3), 4), and 5).

\subsection{Threat Model}
\label{threatmodel}

\begin{figure}[t]
\center
{\includegraphics[width=0.42\textwidth]{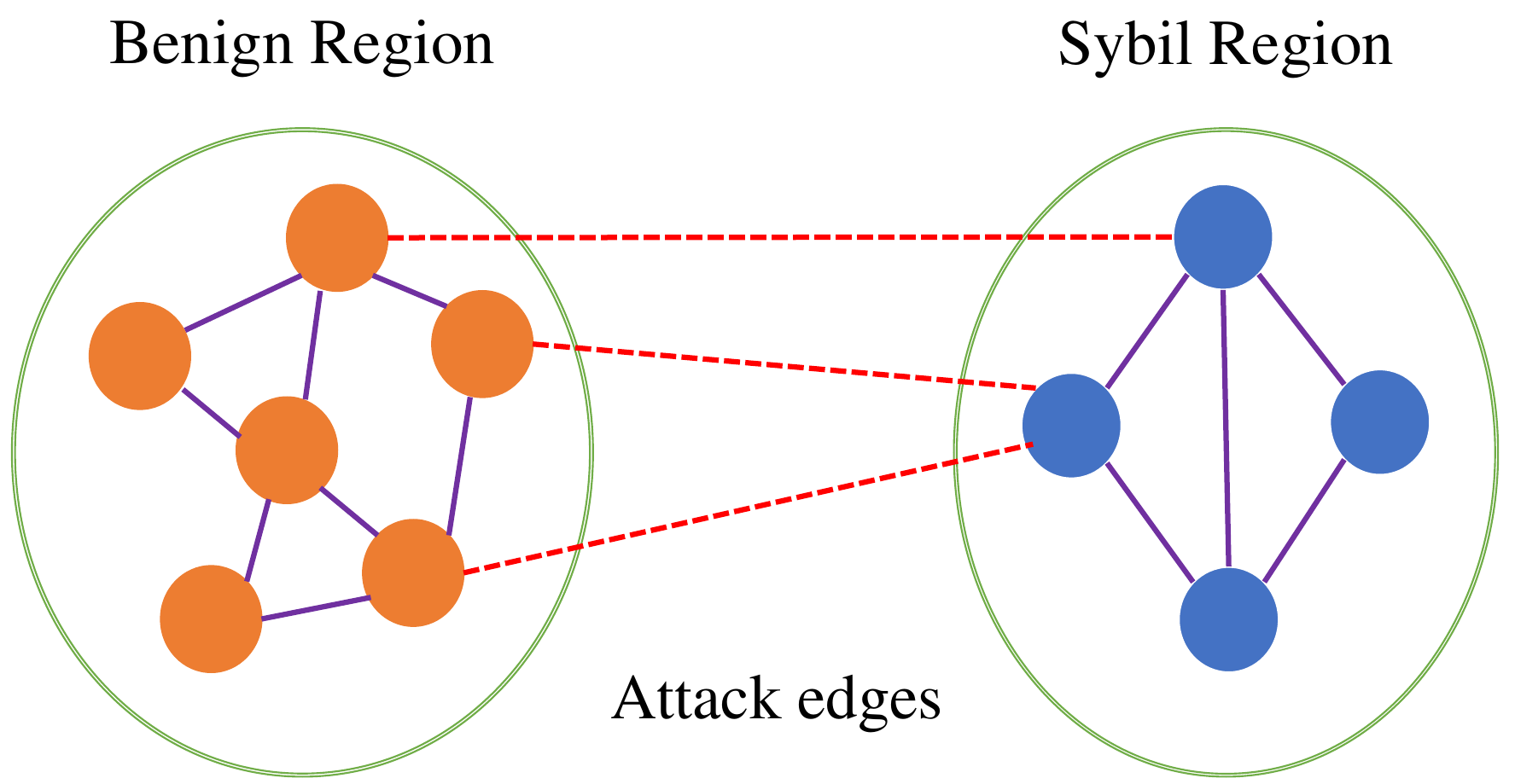}}
\caption{Benign region, Sybil region, and attack edges.}
\label{graph_attacks}
\end{figure}

We call the subgraph containing all benign nodes and edges between them the \emph{benign region}, and call the subgraph containing all Sybil nodes and edges between them the \emph{Sybil region}. Edges between the two regions are called \emph{attack edges}. Figure~\ref{graph_attacks} illustrates these concepts.

One basic assumption under structure-based Sybil detection methods is that the benign region and the Sybil region are sparsely connected (i.e., the number of attack edges is relatively small), compared with the edges among the two regions. In other words,  most benign users would not establish trust relationships with Sybils. We note that this assumption is equivalent to requiring that the social network follows \emph{homophily}, i.e., two linked nodes share the same label with a high probability. For an extreme example, if the benign region and the Sybil region are separated from each other, then the social network has a perfect homophily, i.e., every two linked nodes have the same label. 
Note that, it is of great importance to obtain social networks that satisfy this assumption, otherwise the detection accuracies of structure-based methods are limited.
For instance, Yang et al.~\cite{yang2014uncovering} showed that RenRen \emph{friendship} social network does not satisfy this assumption, and thus the performance of structure-based methods are unsatisfactory.
However, Cao et al.~\cite{sybilrank} found that Tuenti, the largest online social network in Spain,  satisfies the homophily assumption, and thus SybilRank can detect a large amount of Sybils in Tuenti. 

Generally speaking, there are two ways for  service providers to construct a social network that satisfies homophily. One way is to approximately obtain trust relationships between users by looking into user interactions~\cite{wilson:eurosys09}, predicting tie strength~\cite{gilbert:chi09}, asking users to rate their social contacts~\cite{sybildefender}, etc. The other way is to preprocess the network structure so that structure-based methods are suitable to be applied. Specifically, analysts could 
 detect and remove compromised benign nodes (e.g., front peers)~\cite{wang2010poisonedwater}, or employ feature-based classifier to filter Sybils, so as to decrease the number of attack edges and enhance the homophily. For instance, Alvisi et al.~\cite{alvisiSybil13} showed that if the attack edges are established randomly,  simple feature-based classifiers are sufficient to enforce Sybils to be suitable for structure-based Sybil detection. 
We note that the reason why the RenRen {friendship} social network did not satisfy homophily in the study of Yang et al. is that RenRen even didn't deploy simple feature-based classifiers at that time~\cite{yang2014uncovering}. 

Formally, we measure homophily as the fraction of edges in the social network that are not attack edges. For the same benign region and Sybil region, more attack edges indicate weaker homophily. 
As we will demonstrate in our empirical evaluations, our SybilSCAR can tolerate weaker homophily than existing methods. 

When analyzing the theoretical bound on the falsely accepted Sybils, SybilSCAR further assumes that the iterative process of SybilSCAR converges fast in the benign region, which is similar to the fast mixing assumption of RW-based methods.

\section{Our Local Rule-based Framework}

In this section, we unify existing RW-based methods~\cite{Mohaisen11,sybilrank,Yang12-spam,integro, smartwalk} and LBP-based methods~\cite{sybilbelief, sybilframe} into a local rule-based framework. 
Specifically, these methods first assign the \emph{prior knowledge} of all nodes using a training dataset. Then, they propagate the prior knowledge among the social network to obtain the \emph{posterior knowledge} via iteratively applying their \emph{local rules} to every node. \emph{A local rule is to update the posterior knowledge of a node by combining the influences from its neighbors with its prior knowledge}. We call the influence from a neighbor \emph{neighbor influence}. Figure~\ref{unifiedview} shows our unified framework.

\begin{figure}[t]
\center
{\includegraphics[width=0.4\textwidth]{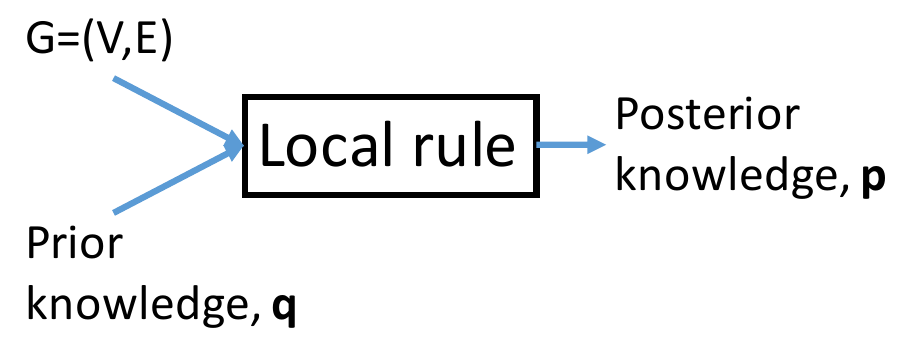}}
\caption{Our proposed framework to unify state-of-the-art RW-based and LBP-based Sybil detection methods.}
\label{unifiedview}
\end{figure}

\myparatight{Notations} We denote by $w_{uv}$ the weight of the edge $(u,v)$, $\Gamma_u$ the set of neighbors of node $u$, and $d_{u}$ the total weights of edges linked to $u$, i.e., $d_{u}=\sum_{v\in \Gamma_u} w_{uv}$.  In RW-based methods~\cite{Mohaisen11,Yang12-spam,integro}, edge weights model the relative importance (e.g., level of trust) of edges. In LBP-based method~\cite{sybilbelief},  an edge weight $w_{uv}$ models the tendency that  $u$ and $v$ share the same label. 
We denote by $q_u$ and $p_u$ the {prior knowledge} and {posterior knowledge} of the node $u$, respectively. In RW-based methods, $q_u$ and $p_u$ are the prior and posterior reputation scores of $u$, respectively, and they represent relative benignness of nodes. In LBP-based methods, $q_u$ and $p_u$ are the prior and posterior probabilities that node $u$ is a Sybil, respectively.

\subsection{Additive Local Rule of RW-based Methods}
State-of-the-art RW-based methods~\cite{Mohaisen11,sybilrank,Yang12-spam, integro} first assign prior reputation scores for every node using a training dataset. Then they iteratively apply
 the following local rule to every node:   
\begin{align}
\label{rwrule}
p_u = (1-\alpha)\sum_{v\in\Gamma_u} \underbrace{p_v\frac{w_{uv}}{d_{v}}}_{\mathrm{neighbor~ influence}} + \alpha \underbrace{q_u}_{\mathrm{prior~ knowledge}},
\end{align}
where $\alpha\in [0,1]$ is called a restart probability of the random walk. 
We note that SybilRank uses a restart probability of 0 and normalizes the final reputation scores by node degrees. 

We have two observations for the additive local rule. First, the neighbor influence from a neighbor $v$ to $u$ is a fraction of $v$'s current reputation score $p_v$, and the fraction is proportional to the edge weight $w_{uv}$.  Second, this local rule combines the prior knowledge and the neighbor influences \emph{linearly} to update the posterior knowledge about a node.

\subsection{Multiplicative Local Rule of LBP-based Methods}
SybilBelief~\cite{sybilbelief}, a LBP-based method, associates a binary random variable $x_u$ with each node $u$, where $x_u=1$ indicates that $u$ is Sybil while $x_u=-1$ indicates that $u$ is benign. Then, $q_u$ and  $p_u$ are the prior and posterior probabilities that $x_u=1$, respectively. SybilBelief first assigns the prior probabilities for nodes using
 a set of labeled benign nodes and/or a set of labeled Sybils, and then it iteratively applies the following local rule~\cite{sybilbelief}:
\begin{align}
\label{lbpneighbor}
&\underbrace{m_{vu}(x_u)}_{\mathrm{neighbor~ influence}} = \sum_{x_v} \phi_v(x_v) \varphi_{vu}(x_v, x_u) \prod_{z \in \Gamma_v/u} m_{zv}(x_v) \\
\label{lbpcombine}
&p_u = \frac{\overbrace{q_u}^{\mathrm{prior~ knowledge}} \prod_{v \in \Gamma_u} m_{vu}(1)}{q_u \prod_{v \in \Gamma_u} m_{vu}(1) + (1-q_u) \prod_{v \in \Gamma_u} m_{vu}(-1)},
\end{align}
where node potential $\phi_v(x_v)$ and edge potential $\varphi_{vu}(x_v, x_u)$ are defined as follows:
\begin{displaymath}
\phi_v(x_v):=
\begin{cases} 
q_v  & \text{if } x_v = 1  \\ 
1- q_v  &  \text{if } x_v = -1
\end{cases} 
\end{displaymath}
\begin{displaymath}
\varphi_{vu}(x_v, x_u):=
\begin{cases} 
w_{vu} & \text{if } x_u x_v = 1  \\  
1 - w_{vu} &  \text{if } x_u x_v = -1,
\end{cases}
\end{displaymath}

We also have two observations for the multiplicative local rule. First, \emph{this local rule explicitly models neighbor influences}. Specifically, the neighbor influence from a neighbor $v$ to $u$ (i.e., $m_{vu}(x_u)$) is defined in Equation~\ref{lbpneighbor}. To compute the neighbor influence $m_{vu}(x_u)$, $u$'s neighbor $v$ needs to multiply the neighbor influences from all its neighbors except $u$.
Second, according to Equation~\ref{lbpcombine}, this local rule combines the neighbor influences with the prior probability \emph{nonlinearly}.  

\subsection{Comparing RW-based Additive Local Rule with LBP-based Multiplicative Local Rule} 

LBP-based multiplicative local rule can tolerate a relatively larger fraction of label noise because of its nonlinearity~\cite{sybilbelief}, and it can leverage both labeled benign nodes and labeled Sybils. However,  LBP-based multiplicative local rule is space and time inefficient because it requires a large amount of space and time to maintain the neighbor influences associated with every edge, and methods using this local rule are not guaranteed to converge.
In contrast, RW-based additive local rule is space and time efficient, and methods using this local rule are guaranteed to converge.  However, this local rule is sensitive to label noise, and it cannot leverage labeled benign nodes and labeled Sybils simultaneously.

\section{Design of SybilSCAR}

Under our framework, designing a new Sybil detection method is reduced to designing a new local rule. Therefore, 
we design a novel local rule. Then, we describe how we design SybilSCAR based on the new local rule.

\subsection{Our New Local Rule}

We aim to design a local rule that  integrates the advantages of both RW-based and LBP-based local rules, while overcoming their limitations. Roughly speaking, our idea is to leverage the \emph{multiplicativeness} like LBP-based local rule  to be robust to label noise, while \emph{avoiding maintaining neighbor influences} to be as space and time efficient as RW-based local rule. Next, we show how we model neighbor influences and combine neighbor influences with prior knowledge.

\myparatight{Neighbor influence} 
We associate a binary random variable $x_u$ with a node $u$, where $x_u=1$ and  $x_u=-1$ mean that $u$ is a Sybil and benign node, respectively. 
We denote $p_u$  as the posterior probability that $u$ is a Sybil, i.e., $p_u = Pr(x_u=1)$. $p_u > 0.5$ means $u$ is more likely to be a Sybil; $p_u < 0.5$ means $u$ is more likely to be benign; and $p_u = 0.5$ means we cannot decide $u$'s label. 
We model the probability that $u$ and $v$ have the same label as $w_{vu} \in [0, 1]$, which defines the \emph{homophily strength} of the edge $(u,v)$. 
Formally, we have:
\begin{align}
\text{Pr}(x_u=1|x_v=1)=\text{Pr}(x_u=-1|x_v=-1)=w_{vu}, \nonumber \\  
\text{Pr}(x_u=1|x_v=-1)=\text{Pr}(x_u=-1|x_v=1)=1-w_{vu}. \nonumber
\end{align}
$w_{uv}>0.5$ means that $u$ and $v$ are in a \emph{homogeneous relationship}, i.e., they tend to share the same label;  $w_{uv}<0.5$ means that $u$ and $v$ are in a \emph{heterogeneous relationship}, i.e., they tend to have the opposite labels; and $w_{uv}=0.5$ means that $u$ and $v$ are not correlated.

We denote by $f_{vu}$ the \emph{neighbor influence} of a neighbor $v$ to $u$.  
$f_{vu}$ is defined as the probability that $u$ is a Sybil (i.e., $x_u=1$), given the neighbor $v$'s information alone.
According to the \emph{law of total probability}, we compute $f_{vu}$ as: 
\begin{align}
\label{ourneighbor}
f_{vu} &=	 \text{Pr}(x_u=1| x_v=1)\text{Pr}(x_v=1) \nonumber\\
	&+ \text{Pr}(x_u=1| x_v=-1)\text{Pr}(x_v=-1) \nonumber \\
	&=w_{vu}p_v + (1-w_{vu})(1-p_v).
\end{align}
We have several observations from Equation~\ref{ourneighbor}:
\begin{packeditemize}
\item $v$ has no neighbor influence to $u$ (i.e., $f_{vu}=0.5$) if $v$'s label is undecidable (i.e., $p_v=0.5$) or $u$ and $v$ are uncorrelated (i.e., $w_{vu} = 0.5$); 
\item $v$ has a positive neighbor influence to $u$ if $v$ and $u$ are in a homogeneous relationship, i.e., if $ p_v>0.5$ $(or <0.5) \text{ and }  w_{vu}>0.5$, then  $f_{vu}>0.5$ $( or <0.5)$; 
\item $v$ has a negative neighbor influence to $u$ if $v$ and $u$ are in a heterogeneous relationship, i.e., if $ p_v>0.5$  $(or <0.5) \text{ and }  w_{vu}<0.5$, then  $f_{vu}<0.5$  $(or >0.5)$. 
\end{packeditemize}

\myparatight{Combining neighbor influences with prior} In our local rule, a node's posterior probability of being Sybil is updated by combining its neighbor influences with its prior probability of being Sybil. In order to tolerate label noise, we leverage the multiplicative local rule in LBP-based methods. Specifically, we have:
\begin{align}
\label{ourcombine}
p_u = \frac{q_u \prod_{v \in \Gamma_u} f_{vu}}{q_u \prod_{v \in \Gamma_u} f_{vu} + (1-q_u) \prod_{v \in \Gamma_u} (1-f_{vu})}.
\end{align}

However, methods that iteratively apply the above multiplicative local rule to every user are not guaranteed to converge. Therefore we further \emph{linearize} Equation~\ref{ourcombine}. 
We first define two concepts \emph{residual variable} and \emph{residual vector}.

\begin{definition}[Residual Variable and Vector] 
\label{def_1}
We define the residual of a variable $y$ as $\hat{y}=y-0.5$; and we define the residual vector $\hat{\mathbf{y}}$ of $\mathbf{y}$ as $\hat{\mathbf{y}} = [y_1-0.5, y_2-0.5, \cdots]$.
\end{definition}

With above definition, we denote $\hat{w}_{vu}$ as the residual homophily strength.  
Moreover, by substituting variables in Equation~\ref{ourneighbor} with their corresponding residuals, we have 
the residual  neighbor influence $\hat{f}_{vu}$ as follows:
\begin{align}
\label{res_nei_influ_sybil}
\hat{f}_{vu} = 2 \hat{p}_{v} \hat{w}_{vu}.
\end{align}

Based on the approximations $\ln (1+x) \approx x$ and  $\ln (1- x) \approx -x$ when $x$ is small, we have the following theorem, which linearizes Equation~\ref{ourcombine}.

\begin{theorem}
\label{theorem_1}
The residual posterior probability of being a Sybil for a node $u$ can be linearized as:
\begin{equation}
\label{res_bel}
\hat{p}_{u}  = \hat{q}_{u} + \sum_{v \in \Gamma(u)} \hat{f}_{vu}.
\end{equation}
\end{theorem}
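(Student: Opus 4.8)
The plan is to eliminate the awkward normalizing denominator in Equation~\ref{ourcombine} by passing to log-odds, which turns the products into sums and lets the stated logarithmic approximations act term by term. First I would form the odds ratio $p_u/(1-p_u)$ directly from Equation~\ref{ourcombine}. Writing $A = q_u \prod_{v} f_{vu}$ and $B = (1-q_u)\prod_v (1-f_{vu})$, the combining rule says $p_u = A/(A+B)$, so $1-p_u = B/(A+B)$ and the shared denominators cancel:
\begin{align}
\frac{p_u}{1-p_u} = \frac{q_u}{1-q_u}\prod_{v\in\Gamma_u}\frac{f_{vu}}{1-f_{vu}}. \nonumber
\end{align}
Taking natural logarithms converts this into the additive identity
\begin{align}
\ln\frac{p_u}{1-p_u} = \ln\frac{q_u}{1-q_u} + \sum_{v\in\Gamma_u}\ln\frac{f_{vu}}{1-f_{vu}}. \nonumber
\end{align}

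The second step is to linearize each log-odds term using the residual substitution from Definition~\ref{def_1}. For any probability $y$ I would write $y=0.5+\hat{y}$ and $1-y=0.5-\hat{y}$, pull out the factor $0.5$ inside each logarithm, and obtain
\begin{align}
\ln\frac{y}{1-y} = \ln(1+2\hat{y}) - \ln(1-2\hat{y}). \nonumber
\end{align}
Applying the approximations $\ln(1+x)\approx x$ and $\ln(1-x)\approx -x$ with $x=2\hat{y}$ gives $\ln\frac{y}{1-y}\approx 4\hat{y}$. Substituting this for each of the three kinds of log-odds terms ($y=p_u$, $y=q_u$, and $y=f_{vu}$) yields $4\hat{p}_u \approx 4\hat{q}_u + \sum_{v\in\Gamma_u} 4\hat{f}_{vu}$, and dividing through by the common factor $4$ produces exactly Equation~\ref{res_bel}.

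The main obstacle is justifying the linearization rather than the algebra: the approximations $\ln(1\pm x)\approx \pm x$ are only accurate when $x=2\hat{y}$ is small, i.e.\ when every residual $\hat{p}_u$, $\hat{q}_u$, and $\hat{f}_{vu}$ stays close to zero (equivalently, all probabilities remain near $0.5$). I would therefore present the statement as a first-order linearization valid in that regime, and emphasize that the uniform factor of $4$ arising in all three terms is precisely what cancels cleanly, so no residual scaling constant survives to spoil the clean form of Equation~\ref{res_bel}. A secondary consistency check I would include is substituting Equation~\ref{res_nei_influ_sybil}: inserting $\hat{f}_{vu}=2\hat{p}_v\hat{w}_{vu}$ turns Equation~\ref{res_bel} into $\hat{p}_u=\hat{q}_u+2\sum_{v}\hat{p}_v\hat{w}_{vu}$, the concrete linear update that SybilSCAR iterates, confirming that the residual neighbor influence derived earlier plugs directly into the linearized rule.
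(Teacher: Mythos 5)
Your proposal is correct, and it reaches Equation~\ref{res_bel} by a genuinely different (and arguably cleaner) route than the paper. The paper keeps the normalizer $\mathcal{Z}_u = q_u \prod_{v} f_{vu} + (1-q_u)\prod_{v}(1-f_{vu})$ in play throughout: it writes residual forms of both $p_u = \frac{1}{\mathcal{Z}_u} q_u \prod_v f_{vu}$ and $1-p_u = \frac{1}{\mathcal{Z}_u}(1-q_u)\prod_v (1-f_{vu})$, linearizes each with $\ln(1\pm x)\approx \pm x$, and then \emph{adds} the two linearized equations to deduce $\ln \mathcal{Z}_u = |\Gamma(u)|\ln(0.5)$ --- an identity that holds only to first order --- before substituting back. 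You instead pass to the odds ratio $p_u/(1-p_u)$, so $\mathcal{Z}_u$ cancels \emph{exactly} at the outset, the product structure becomes an exact additive identity in log-odds, and the approximation $\ln\frac{y}{1-y}=\ln(1+2\hat y)-\ln(1-2\hat y)\approx 4\hat y$ enters at exactly one point, after which the uniform factor of $4$ divides out (this mirrors the paper's factor-of-$2$ bookkeeping). What your version buys is logical tidiness --- exact algebra everywhere except a single clearly isolated linearization step, plus a transparent connection to standard log-odds/logistic-style additive updates; what the paper's version buys is an explicit (approximate) evaluation of the normalization constant and a derivation whose intermediate objects track the LBP-style update it generalizes. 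Your statement of the validity regime (all residuals small, i.e.\ probabilities near $0.5$) matches the paper's implicit assumption, and your closing consistency check, substituting $\hat f_{vu}=2\hat p_v \hat w_{vu}$ to recover the SybilSCAR update $\hat p_u = \hat q_u + 2\sum_{v\in\Gamma_u}\hat p_v \hat w_{vu}$, is exactly how the paper assembles Equation~\ref{ourlocalrule}.
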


\begin{proof}
See Appendix~\ref{app:theorem1}.
\end{proof}

By combining Equation~\ref{res_nei_influ_sybil} and Equation~\ref{res_bel}, we obtain our new local rule as follows:
\begin{align}
\label{ourlocalrule}
\text{\bf Our local rule:} \quad 
\hat{p}_{u}  = \hat{q}_{u} + 2 \sum_{v \in \Gamma(u)} \hat{p}_{v} \hat{w}_{vu}.
\end{align}

\subsection{SybilSCAR Algorithm} 
Our SybilSCAR iteratively applies our local rule to every node to compute the posterior probabilities. 
Suppose we are given a set of labeled Sybils which we denote as $L_s$ and a set of labeled benign nodes which we denote as $L_b$. SybilSCAR first utilizes $L_s$ and $L_b$ to assign a prior probability of being a Sybil for all nodes. Specifically, 
 \begin{align}
 \label{prior}
 q_u=
 \begin{cases} 
0.5 + \theta  & \text{if } u \in L_s   \\ 
0.5 - \theta  &  \text{if } u \in L_b \\
0.5 & \text{otherwise},
\end{cases} 
 \end{align}
where $\theta >0$ indicates that we assign a higher prior probability of being a Sybil to labeled Sybils. Considering that the labels might have noise, we will set $\theta$ to be smaller than 0.5. 
In practice, these prior probabilities can also be obtained from feature-based methods. Specifically, for each user we can leverage a binary classifier, trained using user's local features, to produce the probability of being a Sybil, which  can then be treated as the user's prior probability. With such prior probabilities, SybilSCAR iteratively applies our local rule in Equation~\ref{ourlocalrule} to update residual posterior probabilities of all nodes.

\myparatight{Representing SybilSCAR as a matrix form} For convenience, we denote by a vector ${\bf q}$ the prior probability of being a Sybil for all nodes, i.e., ${\bf q}=[q_1; q_2; \cdots ; q_{|V|}]$. Similarly, we denote by a vector ${\bf p}$  the posterior probability of all nodes, i.e., ${\bf p}=[p_1; p_2; \cdots; p_{|V|}]$. Moreover, we denote $\hat{\mathbf{q}}$ and $\hat{\mathbf{p}}$ as the residual prior probability vector and residual posterior probability vector of all nodes, respectively. 
We denote $\mathbf{A}\in \mathbb{R}^{|V| \times |V|}$ as the adjacency matrix of the social graph, where the $u$th row represents the neighbors of $u$. 
Formally, if there exists an edge $(u,v)$ between nodes $u$ and $v$, then the entry $A_{uv}=A_{vu}=1$, otherwise  $A_{uv}=A_{vu}=0$. 
{Moreover, we denote $\hat{\mathbf{W}}$ as the corresponding residual homophily strength matrix , where $\hat{w}_{vu} = 0$ if $A_{vu}=0$.}
With these notations, we can represent our SybilSCAR as iteratively applying the following equation:
\begin{align}
\label{approx_update}
 \hat{\mathbf{p}}^{(t)}= \hat{\mathbf{q}} + 2 \hat{\mathbf{W}} \hat{\mathbf{p}}^{(t-1)},
\end{align}
where $ \hat{\mathbf{p}}^{(t)}$ is the residual posterior probability vector in the $t$th iteration. 
Initially, we set $ \hat{\mathbf{p}}^{(0)}=\hat{\mathbf{q}}$.

{Algorithm}~\ref{alg:SybilSCAR} summarizes the  pseudocode of SybilSCAR. 
We stop running SybilSCAR when the relative errors of residual posterior probabilities between two consecutive iterations is smaller than some threshold $\delta$ or it reaches the predefined number of maximum iterations $T$. 
After SybilSCAR halts, we predict $u$ to be a Sybil if $p_u > 0.5$, otherwise we predict $u$ to be benign. 

In this paper, we consider the following two cases for the homophily strength $\hat{\mathbf{W}}$. Although we study these two settings, we believe that learning the homophily strength for each edge would be a valuable future work.

\myparatight{SybilSCAR-C} In this variant, we use a constant homophily strength for all edges, i.e., $\hat{w}_{vu} = \hat{w}$. 

\myparatight{SybilSCAR-D} In this variant, we use a degree-normalized homophily strength for each edge, i.e., $\hat{w}_{vu} = \frac{1}{2 d_u}$, where $d_u$ is the degree of node $u$. 
The intuition is that when a node has many neighbors, each neighbor has a small influence on the node. In this variant, a node's residual posterior probability is the sum of its residual prior probability and the average residual posterior probability of its neighbors. We note that SybilSCAR-D essentially uses the RW-based \emph{neighbor influence} proposed by SybilWalk~\cite{SybilWalk}. The differences with SybilWalk include 1) SybilWalk resets the residual posterior probabilities of labeled nodes to their residual prior probabilities in each iteration, and 2) SybilWalk does not consider prior probabilities of unlabeled nodes.

\begin{algorithm}[!t]
\caption{SybilSCAR}
\label{alg:SybilSCAR}
\begin{algorithmic}
\REQUIRE $G=(V,E)$, $L_s$, $L_b$, $\theta$, $\hat{\mathbf{W}}$, $\delta$, and $T$. \\
\ENSURE  $p_u, \forall \, u \in V$. \\
    Initialize $\hat{\mathbf{p}}^{(0)} = \hat{\mathbf{q}}$. \\
    Initialize $t=1$. \\
    \WHILE { $\frac{\|\hat{\mathbf{p}}^{(t)} - \hat{\mathbf{p}}^{(t-1)}\|_1}{\|\hat{\mathbf{p}}^{(t)}\|_1} \geq \delta$ and $t \leq T$} 
    \STATE Update residual posterior vector $\hat{\mathbf{p}}^{(t)}$ using Equation~\ref{approx_update}. 
    \STATE  $t=t+1$. 
    \ENDWHILE \\
    \RETURN $\hat{\mathbf{p}}^{(t)} + 0.5$. \\
\end{algorithmic}
\end{algorithm}

\section{Theoretical Analysis}
We first analyze the convergence condition of SybilSCAR.
Then we analyze its peformance bound. 
Finally, we analyze its computational complexity.

\subsection{Convergence Condition}
\label{sec:cvgAnal}
We analyze the condition when SybilSCAR converges.

\begin{lemma}[Sufficient and Necessary Convergence Condition for a Linear System~\cite{saad2003iterative}] 
\label{linearsystem}
Suppose we are given an iterative linear process: $\mathbf{y}^{(t)} \leftarrow \mathbf{c} + \mathbf{M} \mathbf{y}^{(t-1)}$. 
The linear process converges with any initial choice $\mathbf{y}^{(0)}$ if and only if the spectral radius\footnote{The spectral radius of a square matrix is the maximum of the absolute values of its eigenvalues.} of $\mathbf{M}$ is smaller than 1, i.e., $\rho(\mathbf{M}) < 1$. 
\end{lemma}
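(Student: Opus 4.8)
The plan is to reduce the convergence of the affine iteration to the asymptotic behaviour of the matrix powers $\mathbf{M}^t$, and then to settle the purely matrix-theoretic equivalence $\mathbf{M}^t \to \mathbf{0} \iff \rho(\mathbf{M}) < 1$ via the Jordan canonical form.

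First I would unroll the recursion. Substituting repeatedly gives $\mathbf{y}^{(t)} = \mathbf{M}^t \mathbf{y}^{(0)} + \big(\sum_{k=0}^{t-1}\mathbf{M}^k\big)\mathbf{c}$, which I verify by a one-line induction on $t$. Equivalently, if the iteration converges its limit $\mathbf{y}^\ast$ satisfies $\mathbf{y}^\ast = \mathbf{c} + \mathbf{M}\mathbf{y}^\ast$, and the error $\mathbf{e}^{(t)} = \mathbf{y}^{(t)} - \mathbf{y}^\ast$ obeys $\mathbf{e}^{(t)} = \mathbf{M}\mathbf{e}^{(t-1)} = \mathbf{M}^t\mathbf{e}^{(0)}$. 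Either form isolates the only initial-condition-dependent contribution as $\mathbf{M}^t$ acting on a vector; since $\mathbf{e}^{(0)}$ (equivalently, the difference of two arbitrary initial guesses) ranges over all of $\mathbb{R}^{|V|}$, convergence for every initial choice is equivalent to $\mathbf{M}^t \to \mathbf{0}$ as $t \to \infty$. (The edge eigenvalue $\lambda = 1$, where $\mathbf{M}^t$ could converge to a nonzero limit, is excluded by demanding a unique limit, i.e.\ convergence for every right-hand side $\mathbf{c}$.)

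The heart of the argument is then the equivalence $\mathbf{M}^t \to \mathbf{0} \iff \rho(\mathbf{M}) < 1$. For necessity I take an eigenpair $\mathbf{M}\mathbf{v} = \lambda\mathbf{v}$ with $|\lambda| = \rho(\mathbf{M})$; since $\mathbf{M}^t\mathbf{v} = \lambda^t\mathbf{v}$ with $|\lambda^t| = \rho(\mathbf{M})^t$, the sequence $\mathbf{M}^t\mathbf{v}$ cannot tend to $\mathbf{0}$ unless $\rho(\mathbf{M}) < 1$, so $\mathbf{M}^t \to \mathbf{0}$ forces $\rho(\mathbf{M}) < 1$. For sufficiency I write $\mathbf{M} = \mathbf{P}\mathbf{J}\mathbf{P}^{-1}$ in Jordan form, so $\mathbf{M}^t = \mathbf{P}\mathbf{J}^t\mathbf{P}^{-1}$ and it suffices to show $\mathbf{J}^t \to \mathbf{0}$ block by block. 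For a block $\lambda\mathbf{I}+\mathbf{N}$ with $\mathbf{N}$ nilpotent, the entries of its $t$-th power are $\binom{t}{j}\lambda^{t-j}$, a polynomial in $t$ times $\lambda^{t-j}$; because $|\lambda| \le \rho(\mathbf{M}) < 1$, geometric decay dominates polynomial growth and every entry tends to $0$. (Alternatively I could invoke that for each $\varepsilon > 0$ there is an operator norm with $\|\mathbf{M}\| \le \rho(\mathbf{M}) + \varepsilon$, choose $\varepsilon$ with $\rho(\mathbf{M}) + \varepsilon < 1$, and conclude $\|\mathbf{M}^t\| \le \|\mathbf{M}\|^t \to 0$.)

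Combining the pieces closes the iff: when $\rho(\mathbf{M}) < 1$, the matrix $\mathbf{I} - \mathbf{M}$ is nonsingular (it has no zero eigenvalue), the Neumann partial sums $\sum_{k=0}^{t-1}\mathbf{M}^k$ converge to $(\mathbf{I}-\mathbf{M})^{-1}$, and $\mathbf{y}^{(t)} \to (\mathbf{I}-\mathbf{M})^{-1}\mathbf{c}$ regardless of $\mathbf{y}^{(0)}$; conversely, convergence for every initial choice forces $\mathbf{M}^t \to \mathbf{0}$ and hence $\rho(\mathbf{M}) < 1$. The step I expect to be the main obstacle is the sufficiency direction for a \emph{defective} (non-diagonalizable) $\mathbf{M}$: a plain eigendecomposition is unavailable, so I must control the nilpotent part of each Jordan block and make the ``polynomial beaten by geometric decay'' estimate precise, which is exactly where the Jordan form (or the equivalent-norm trick) is indispensable.
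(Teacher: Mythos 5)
Your proof is correct, but there is nothing in the paper to compare it against step by step: the paper's entire ``proof'' of this lemma is the single line ``See~\cite{saad2003iterative}'', i.e.\ it defers to Saad's textbook. What you have written is a self-contained reconstruction of exactly that textbook argument: unroll the iteration to isolate $\mathbf{M}^t$ (or pass to the error $\mathbf{e}^{(t)}=\mathbf{M}^t\mathbf{e}^{(0)}$), prove $\mathbf{M}^t\to\mathbf{0}\iff\rho(\mathbf{M})<1$ via an eigenpair for necessity and the Jordan form (or the equivalent-norm trick $\|\mathbf{M}\|\le\rho(\mathbf{M})+\varepsilon$) for sufficiency, and identify the limit through the Neumann series $\sum_{k\ge 0}\mathbf{M}^k=(\mathbf{I}-\mathbf{M})^{-1}$. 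All of these steps are sound, including the handling of defective $\mathbf{M}$, which is indeed the only delicate point. One remark is worth making explicit: you correctly flag that the lemma as literally stated --- with $\mathbf{c}$ fixed --- fails in the edge case of eigenvalue $1$ (e.g.\ $\mathbf{M}=\mathbf{I}$, $\mathbf{c}=\mathbf{0}$ converges for every $\mathbf{y}^{(0)}$ while $\rho(\mathbf{M})=1$), so the ``only if'' direction requires either uniqueness of the limit or convergence for every right-hand side $\mathbf{c}$; this is precisely how Saad phrases the theorem, and your parenthetical fix repairs a small imprecision in the paper's paraphrase rather than introducing a gap of your own.
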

\begin{proof}
See~\cite{saad2003iterative}.
\end{proof}

Based on Equation~\ref{approx_update} and {Lemma}~\ref{linearsystem}, we are able to analyze the convergence condition of SybilSCAR.  

\begin{theorem}[Sufficient and Necessary Convergence Condition of SybilSCAR] 
\label{lemma_2}
The sufficient and necessary condition that makes SybilSCAR converge is equivalent to 
\begin{align}
\label{suff_ness_cond}
\begin{split}
 \rho(\hat{\mathbf{W}}) < \frac{1}{2}.
\end{split}
\end{align}

\end{theorem}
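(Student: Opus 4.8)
The plan is to recognize the SybilSCAR iteration as a special case of the abstract linear process in Lemma~\ref{linearsystem} and then translate the resulting condition via the scaling behavior of the spectral radius. The whole argument is essentially a substitution followed by one elementary observation, so I do not anticipate any serious difficulty.

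First I would take the matrix form of SybilSCAR, namely Equation~\ref{approx_update},
\begin{align}
\hat{\mathbf{p}}^{(t)} = \hat{\mathbf{q}} + 2 \hat{\mathbf{W}} \hat{\mathbf{p}}^{(t-1)}, \nonumber
\end{align}
and pattern-match it against the generic iteration $\mathbf{y}^{(t)} \leftarrow \mathbf{c} + \mathbf{M}\mathbf{y}^{(t-1)}$ of Lemma~\ref{linearsystem}. Identifying $\mathbf{y}^{(t)} := \hat{\mathbf{p}}^{(t)}$, $\mathbf{c} := \hat{\mathbf{q}}$, and $\mathbf{M} := 2\hat{\mathbf{W}}$ shows that the SybilSCAR update is literally an instance of this process. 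Lemma~\ref{linearsystem} then applies directly and yields that SybilSCAR converges from any initial choice $\hat{\mathbf{p}}^{(0)}$ if and only if $\rho(2\hat{\mathbf{W}}) < 1$.

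The remaining step is to rewrite $\rho(2\hat{\mathbf{W}})$ in terms of $\rho(\hat{\mathbf{W}})$. I would use the fact that multiplying a square matrix by a scalar $c$ multiplies each of its eigenvalues by $c$, so $\rho(c\mathbf{A}) = |c|\,\rho(\mathbf{A})$; this follows because $\det(2\hat{\mathbf{W}} - \lambda \mathbf{I}) = 2^{|V|}\det\!\big(\hat{\mathbf{W}} - \tfrac{\lambda}{2}\mathbf{I}\big)$, so $\lambda$ is an eigenvalue of $2\hat{\mathbf{W}}$ exactly when $\lambda/2$ is an eigenvalue of $\hat{\mathbf{W}}$. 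With $c = 2$ this gives $\rho(2\hat{\mathbf{W}}) = 2\rho(\hat{\mathbf{W}})$, hence
\begin{align}
\rho(2\hat{\mathbf{W}}) < 1 \;\Longleftrightarrow\; 2\rho(\hat{\mathbf{W}}) < 1 \;\Longleftrightarrow\; \rho(\hat{\mathbf{W}}) < \tfrac{1}{2}, \nonumber
\end{align}
which is precisely the claimed sufficient and necessary condition.

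There is no real obstacle here, since Lemma~\ref{linearsystem} already supplies the heavy convergence theory for linear fixed-point iterations; the only point worth stating explicitly is the eigenvalue-scaling identity, and even that reduces to the determinant manipulation above. If I wanted to be fully rigorous I would note that the equivalence in Lemma~\ref{linearsystem} is stated for convergence with \emph{any} initial vector, which matches the ``makes SybilSCAR converge'' phrasing in the theorem and justifies reading the condition as both necessary and sufficient rather than merely sufficient.
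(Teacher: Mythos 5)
Your proof is correct and follows the same route as the paper: the paper's entire proof is the one-line application of Lemma~\ref{linearsystem} to Equation~\ref{approx_update} with $\mathbf{M} = 2\hat{\mathbf{W}}$, and your explicit justification of $\rho(2\hat{\mathbf{W}}) = 2\rho(\hat{\mathbf{W}})$ simply spells out a step the paper leaves implicit.
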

\begin{proof}
{By directly using {Lemma}~\ref{linearsystem} in Equation~\ref{approx_update}.}
\end{proof}

{Theorem}~\ref{lemma_2} provides a strong sufficient and necessary convergence condition.
However, in practice 
using Theorem~\ref{lemma_2} is computationally expensive, as it involves computing the largest eigenvalue with respect to spectral radius of $\hat{\mathbf{W}}$. Hence, we instead derive a \emph{sufficient condition} for SybilSCAR's convergence, which enables us to set $\hat{w}$  with cheap computation. Specifically, our sufficient condition is based on the fact that any norm is an upper bound of the spectral radius~\cite{derzko1965bounds}, i.e., $\rho(\mathbf{M}) \leq \| \mathbf{M} \|$, where $\| \cdot \|$ indicates some matrix norm. 
In particular, we use the induced $l_\infty$ matrix norm $\| \cdot \|_\infty$ \footnote{$\| \mathbf{M} \|_\infty = \max_{i} \sum_{j} | \mathbf{M}_{ij} | $, the maximum absolute row sum of the matrix.}. 
In this way, our sufficient condition for convergence is as follows:

\begin{theorem}[Sufficient Convergence Condition of SybilSCAR] 
\label{lemma_3}
A sufficient condition that makes SybilSCAR converge is
\begin{align}
\label{suff_cond}
\| \hat{\mathbf{W}} \|_\infty < \frac{1}{2}.  
\end{align}
\end{theorem}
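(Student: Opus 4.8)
The plan is to derive this sufficient condition directly from the sharp criterion already established in Theorem~\ref{lemma_2}, namely that SybilSCAR converges if and only if $\rho(\hat{\mathbf{W}}) < \frac{1}{2}$. Thus the entire task reduces to showing that the hypothesis $\|\hat{\mathbf{W}}\|_\infty < \frac{1}{2}$ forces $\rho(\hat{\mathbf{W}}) < \frac{1}{2}$, after which convergence is immediate by invoking Theorem~\ref{lemma_2} on the iteration in Equation~\ref{approx_update}.

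First I would recall the standard fact that the spectral radius is bounded above by any induced (operator) matrix norm, so in particular $\rho(\hat{\mathbf{W}}) \le \|\hat{\mathbf{W}}\|_\infty$; this is the content cited from~\cite{derzko1965bounds}. If a self-contained argument is preferred, I would justify it in one line: let $\lambda$ be any eigenvalue of $\hat{\mathbf{W}}$ with eigenvector $\mathbf{x}\neq\mathbf{0}$, so $\hat{\mathbf{W}}\mathbf{x}=\lambda\mathbf{x}$. Taking the $l_\infty$ norm of both sides and using the definition of the induced norm gives $|\lambda|\,\|\mathbf{x}\|_\infty = \|\hat{\mathbf{W}}\mathbf{x}\|_\infty \le \|\hat{\mathbf{W}}\|_\infty\,\|\mathbf{x}\|_\infty$, and dividing by $\|\mathbf{x}\|_\infty > 0$ yields $|\lambda|\le\|\hat{\mathbf{W}}\|_\infty$. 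Maximizing over all eigenvalues gives $\rho(\hat{\mathbf{W}})\le\|\hat{\mathbf{W}}\|_\infty$.

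The final step is simply to chain the inequalities: under the assumption $\|\hat{\mathbf{W}}\|_\infty < \frac{1}{2}$ we obtain $\rho(\hat{\mathbf{W}}) \le \|\hat{\mathbf{W}}\|_\infty < \frac{1}{2}$, and Theorem~\ref{lemma_2} then guarantees that the process of Equation~\ref{approx_update} converges for any initial choice $\hat{\mathbf{p}}^{(0)}$.

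I do not anticipate a genuine obstacle here, since the statement is a one-step weakening of the exact criterion in Theorem~\ref{lemma_2}; the only point requiring care is to use a matrix norm that actually dominates the spectral radius. The induced $l_\infty$ norm is an operator norm (hence consistent and submultiplicative), so the bound $\rho(\hat{\mathbf{W}}) \le \|\hat{\mathbf{W}}\|_\infty$ is legitimate, whereas a non-induced entrywise norm would require extra justification. The practical payoff I would emphasize is that $\|\hat{\mathbf{W}}\|_\infty$ is a maximum absolute row sum that can be read off cheaply from the graph, avoiding the expensive eigenvalue computation demanded by the exact condition $\rho(\hat{\mathbf{W}}) < \frac{1}{2}$.
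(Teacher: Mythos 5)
Your proposal is correct and follows essentially the same route as the paper: bound the spectral radius by the induced $l_\infty$ norm, $\rho(\hat{\mathbf{W}}) \le \|\hat{\mathbf{W}}\|_\infty$, and then invoke the exact criterion $\rho(\hat{\mathbf{W}}) < \frac{1}{2}$ of Theorem~\ref{lemma_2}. The only difference is that you supply a short self-contained eigenvector argument for the norm bound, whereas the paper simply cites it from~\cite{derzko1965bounds}.
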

\begin{proof}
{As $\rho(\hat{\mathbf{W}}) \leq \|\hat{\mathbf{W}}\|_\infty $, we achieve the sufficient condition by enforcing $2 \|\hat{\mathbf{W}}\|_\infty < 1$, and thus $\|\hat{\mathbf{W}}\|_\infty < \frac{1}{2}$.}
\end{proof}

{Next, we derive the sufficient convergence condition for SybilSCAR-C and SybilSCAR-D, respectively.} 

{
\myparatight{SybilSCAR-C} 
By applying Theorem~\ref{lemma_3}, 
we have the sufficient condition for SybilSCAR-C to converge as  
$\hat{w} < \frac{1}{2 \|\mathbf{A}\|_\infty} = \frac{1}{2 \max_{u\in V} d_u}$.
Note that our result provides a guideline to set $\hat{w}$, i.e., once $\hat{w}$ is smaller than the inverse of 2 times of the maximum node degree, SybilSCAR-C is guaranteed to converge. 
In practice, however, some nodes (e.g., celebrities) could have orders of magnitude bigger degrees than the others (e.g., ordinary people), and such nodes make $\hat{w}$ very small. 
In our experiments, we found that SybilSCAR-C can still converge when replacing the maximum node degree with the average node degree.

\myparatight{SybilSCAR-D} 
In this case, the summation of each row of $\hat{\mathbf{W}}$ has a fixed value $\frac{1}{2} $. Therefore, $ \|\mathbf{W}\|_\infty  = \frac{1}{2} $. 
In practice, it is often $\rho(\hat{\mathbf{W}}) < \|\hat{\mathbf{W}}\|_\infty$, which implies that $\rho(\hat{\mathbf{W}}) < \frac{1}{2}$. Therefore, SybilSCAR-D is also convergent. 
}

\subsection{Security Guarantee}
\myparatight{Existing RW-based methods}
Some existing RW-based Sybil detection methods~\cite{Yu06, Yu08, Danezis09, sybilrank, integro} have theoretical guarantees on the number of Sybils that are falsely accepted into a social network. For instance, Table~\ref{theoryDetectionAccuracy} shows the theoretical guarantees of some representative methods. 
These guarantees are achieved based on the assumption that the benign region of the social network is  \emph{fast-mixing}~\cite{levin2009markov}. 
Roughly speaking, a graph is fast mixing if a random walk on the graph converges to its stationary distribution in $O(\log |V|)$ iterations. 

\begin{table}[!t]\renewcommand{\arraystretch}{1.8}
\centering
\footnotesize
\caption{Summary of theoretical guarantees of various structure-based methods. $g$ is the number of attack edges (sum of weights on the attack edge for \'{I}ntegro) and $d(B)_{\min}$ is the minimum node degree in the benign region.  SybilGuard requires $g=o(\sqrt{|V|}/ \log |V|)$. The symbol ``--" means the corresponding bound is unknown.}
\centering
\begin{tabular}{|c|c|c|} \hline 
{\small Method} & {\small \#Accepted Sybils} \\ \hline
{\small SybilGuard~\cite{Yu06}} & {\small $O(g\sqrt{|V|} \log |V|)$}  \\ \hline
{\small SybilLimit~\cite{Yu08}} & {\small $O(g \log |V|)$}\\ \hline
{\small SybilInfer~\cite{Danezis09}} & {\small --} \\ \hline
{\small SybilRank~\cite{sybilrank}} & {\small $O(g\log |V|)$} \\ \hline
{\small CIA~\cite{Yang12-spam}} & {\small -- } \\ \hline
{\small \'{I}ntegro~\cite{integro}} & {\small $O(g\log |V|)$} \\ \hline
{\small SybilBelief~\cite{sybilbelief}} & {\small -- } \\ \hline
{\small SybilSCAR-D} & {\small  $O(\frac{g\log |V|}{d(\mathcal{S})})$} \\ \hline
\end{tabular}
\label{theoryDetectionAccuracy}
\end{table}

\myparatight{SybilSCAR} We will derive security guarantee for a ``weaker" version of SybilSCAR-D. 
In SybilSCAR-D, in each iteration, a node's residual posterior probability is the sum of its residual prior probability and the average residual posterior probability of its neighbors. In other words, in each iteration, a node's prior probability is injected to influence the dynamics of nodes' residual posterior probabilities, which makes it harder to analyze the dynamics of residual posterior probabilities. Therefore, we consider a weaker version of SybilSCAR-D, in which the nodes' residual prior probabilities are only injected in the initialization step. In other words, we have 
\begin{align}
\hat{\mathbf{p}}^{(0)} & =\hat{\mathbf{q}} \\
\hat{p}_{u}^{(t+1)}  &=  \sum_{v \in \Gamma(u)} \frac{\hat{p}_{v}^{(t)}}{d_u}.
\end{align}
This version of SybilSCAR-D has a converged solution that every node has the same residual posterior probability, i.e., $\hat{p}_{u}=\pi$ for every node $u$ is a solution for SybilSCAR-D.  
We have the following security guarantee for this version of SybilSCAR-D:

\begin{theorem}
\label{theorem:bound}
 Suppose SybilSCAR-D only leverages the nodes' prior probabilities in the initialization step, residual posterior probabilities in the benign region converge in $O(\log |V|)$) iterations (this is similar to the fast mixing assumption of RW-based methods), the attacker randomly establishes $g$ attack edges, and we are only given some  labeled benign nodes. Then, the total number of Sybils whose residual posterior probabilities of being Sybil are lower than those of certain benign nodes is bounded by  $O(\frac{g\log |V|}{d(\mathcal{S})})$,  
where $d(S)$ is the average node degree in the Sybil region. 
\end{theorem}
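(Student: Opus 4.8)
The plan is to recognize the weaker SybilSCAR-D iteration as a degree-normalized random walk and to track a conserved ``benignness mass,'' in direct analogy to the trust-propagation analysis of SybilRank. First I would write the update in matrix form as $\hat{\mathbf{p}}^{(t)} = (\mathbf{D}^{-1}\mathbf{A})^t \hat{\mathbf{q}}$, where $\mathbf{D}$ is the diagonal degree matrix, so that $\mathbf{P} := \mathbf{D}^{-1}\mathbf{A}$ is the row-stochastic random-walk matrix. Since we are given only labeled benign nodes, $\hat{q}_u = -\theta$ for $u \in L_b$ and $\hat{q}_u = 0$ otherwise, hence $\hat{p}_u^{(t)} = -\theta \sum_{v \in L_b}(\mathbf{P}^t)_{uv} = -\theta\,\Pr[\text{a } t\text{-step walk from } u \text{ ends in } L_b]$. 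Defining the benignness $b_u^{(t)} := -\hat{p}_u^{(t)} \geq 0$, the object I actually reason about is how far this walk probability reaches into the Sybil region.

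The second step is a conservation law. Using reversibility of the walk with respect to degree ($d_u(\mathbf{P}^t)_{uv} = d_v(\mathbf{P}^t)_{vu}$), equivalently by summing the scalar update $d_u b_u^{(t+1)} = \sum_{v \in \Gamma(u)} b_v^{(t)}$ over all $u$, I would show that the degree-weighted benignness $B^{(t)} := \sum_{u} d_u b_u^{(t)}$ is invariant and equals $B^{(0)} = \theta\sum_{u \in L_b} d_u$. Under the fast-mixing assumption in the benign region, after $t = O(\log|V|)$ iterations the benignness is essentially uniform across benign nodes and equals a reference value $b^{\ast} = \Theta(B^{(0)}/|E|)$; this $b^{\ast}$ is the natural threshold, since a Sybil is falsely accepted precisely when its benignness exceeds that of the least-benign benign nodes, i.e.\ $b_s^{(t)} \gtrsim b^{\ast}$.

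The crux is bounding the benignness that crosses into the Sybil region. Splitting $B_S^{(t)} := \sum_{s \in S} d_s b_s^{(t)}$ according to whether each neighbor lies in $S$ or in $B$, I would derive the flow identity $B_S^{(t+1)} = B_S^{(t)} - (\text{outflow}) + (\text{inflow})$, where both the inflow and the outflow are sums of $b^{(t)}$ values over the $g$ attack edges \emph{only}. Hence the net per-step increase is at most $g \cdot \max\{\, b_v^{(t)} : v \in B, v \text{ lies on an attack edge}\,\}$. Here the hypothesis that the $g$ attack edges are placed at random enters: their benign endpoints are typical benign nodes rather than seeds, so after mixing their benignness is $O(b^{\ast})$, yielding $B_S^{(t)} = O(g\, b^{\ast} \log|V|)$ over an $O(\log|V|)$ window. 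I expect this inflow estimate to be the main obstacle, because it requires controlling the benignness on the benign side of every attack edge uniformly across all $O(\log|V|)$ iterations, and it is exactly where the random-attack-edge and fast-mixing assumptions are indispensable.

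Finally I would convert the mass bound into a count. Each falsely accepted Sybil contributes $d_s b_s^{(t)} \geq d_s b^{\ast}$ to $B_S^{(t)}$, so if $N$ denotes their number and their degrees are of the order of the average Sybil degree $d(\mathcal{S})$, then $B_S^{(t)} \geq N\, d(\mathcal{S})\, b^{\ast}$. Combining this with $B_S^{(t)} = O(g\, b^{\ast} \log|V|)$ and cancelling the common factor $b^{\ast}$ gives $N = O\!\left(\frac{g \log|V|}{d(\mathcal{S})}\right)$, which is the asserted bound.
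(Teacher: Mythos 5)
Your proposal is correct at the same level of rigor as the paper's own argument, and it arrives at the bound by a genuinely different route. The paper follows SybilRank's mean-field style: it introduces the region averages $\hat{P}_\mathcal{B}^{(t)}$ and $\hat{P}_\mathcal{S}^{(t)}$, posits coupled linear exchange recursions with coefficients $C_\mathcal{B}=g/Vol(\mathcal{B})$ and $C_\mathcal{S}=g/Vol(\mathcal{S})$, solves them as geometric series, and then bounds the count $n_\mathcal{S}$ through a chain of inequalities using first- and second-order Taylor expansions. You instead cast the weak iteration as powers of the row-stochastic walk matrix, prove an \emph{exact} conservation law for the degree-weighted benignness $B^{(t)}=\sum_u d_u b_u^{(t)}$, and do exact flow accounting across the $g$ attack edges, after which the count follows from a mass-versus-threshold argument with $b^{\ast}$ cancelling. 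What your route buys is transparency: the paper's exchange equations are themselves approximations (each attack edge is assumed to carry the benign \emph{average} at every iteration, and the node-average Sybil benignness is silently identified with the degree-weighted average, i.e., total inflow is divided by $|\mathcal{S}|d(\mathcal{S})$), whereas in your version these two heuristics are isolated as explicit assumptions --- typical benignness $O(b^{\ast})$ at attack-edge endpoints uniformly over the $O(\log|V|)$ window, and degree regularity of the accepted Sybils (strictly, your accounting gives $O(g\log|V|/d_{\min}(\mathcal{S}))$; replacing $d_{\min}(\mathcal{S})$ by the average $d(\mathcal{S})$ is exactly the same leap the paper makes implicitly). What the paper's route buys in exchange is a quantitative handle on the benign side: the explicit $(1-C_\mathcal{B}-C_\mathcal{S})^{t}$ dynamics shows the benign average decreases only negligibly over $O(\log|V|)$ iterations, which justifies using a single time-$\Omega$ threshold for benign nodes --- a point your sketch absorbs into the fast-mixing assumption rather than deriving.
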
 

\begin{proof}
See Appendix~\ref{app:analysis}.
\end{proof}

\begin{figure}[!t]
\centering
\subfigure[ER model]{\includegraphics[width=0.24\textwidth]{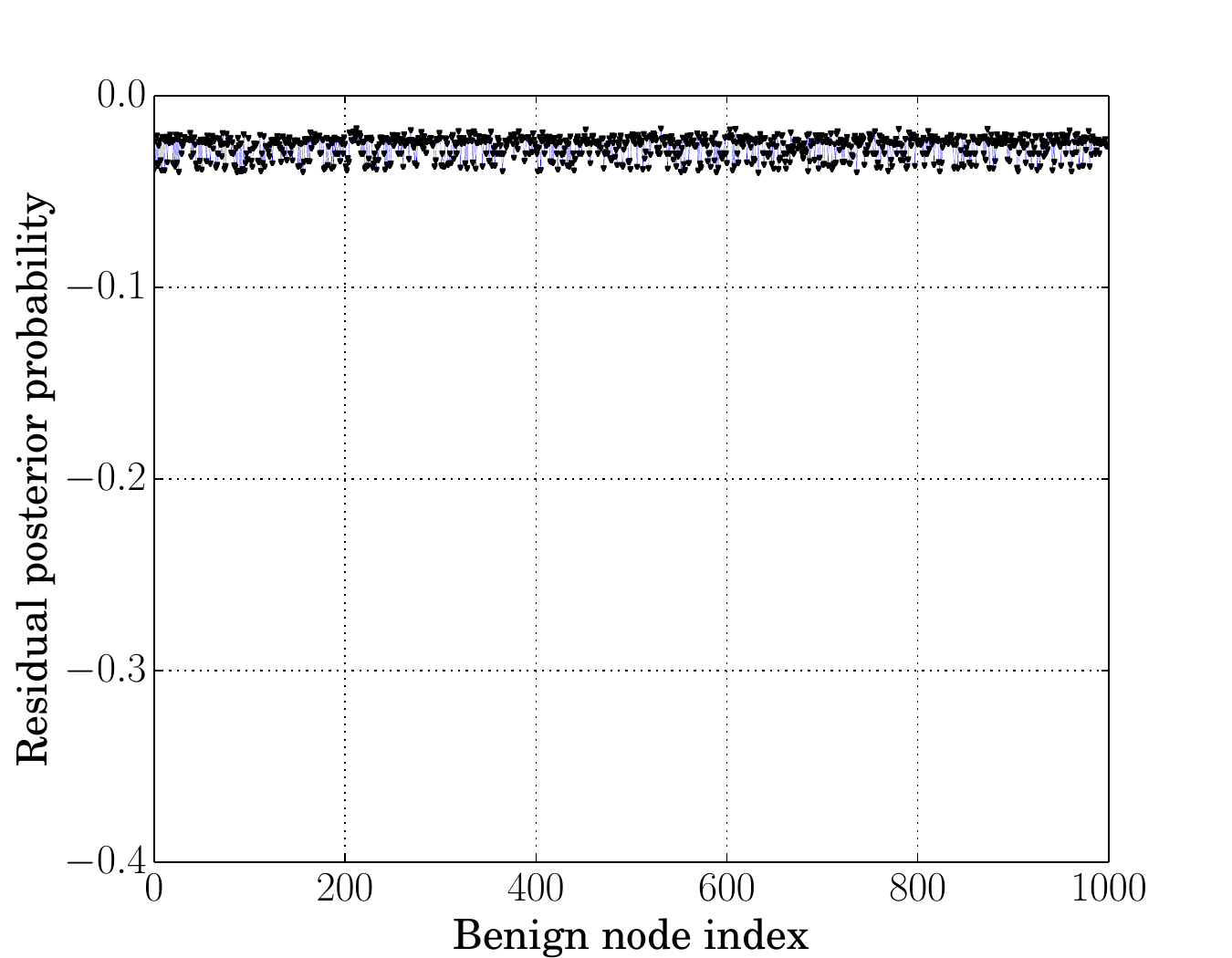}\label{ER}}
\subfigure[PA model]{\includegraphics[width=0.24\textwidth]{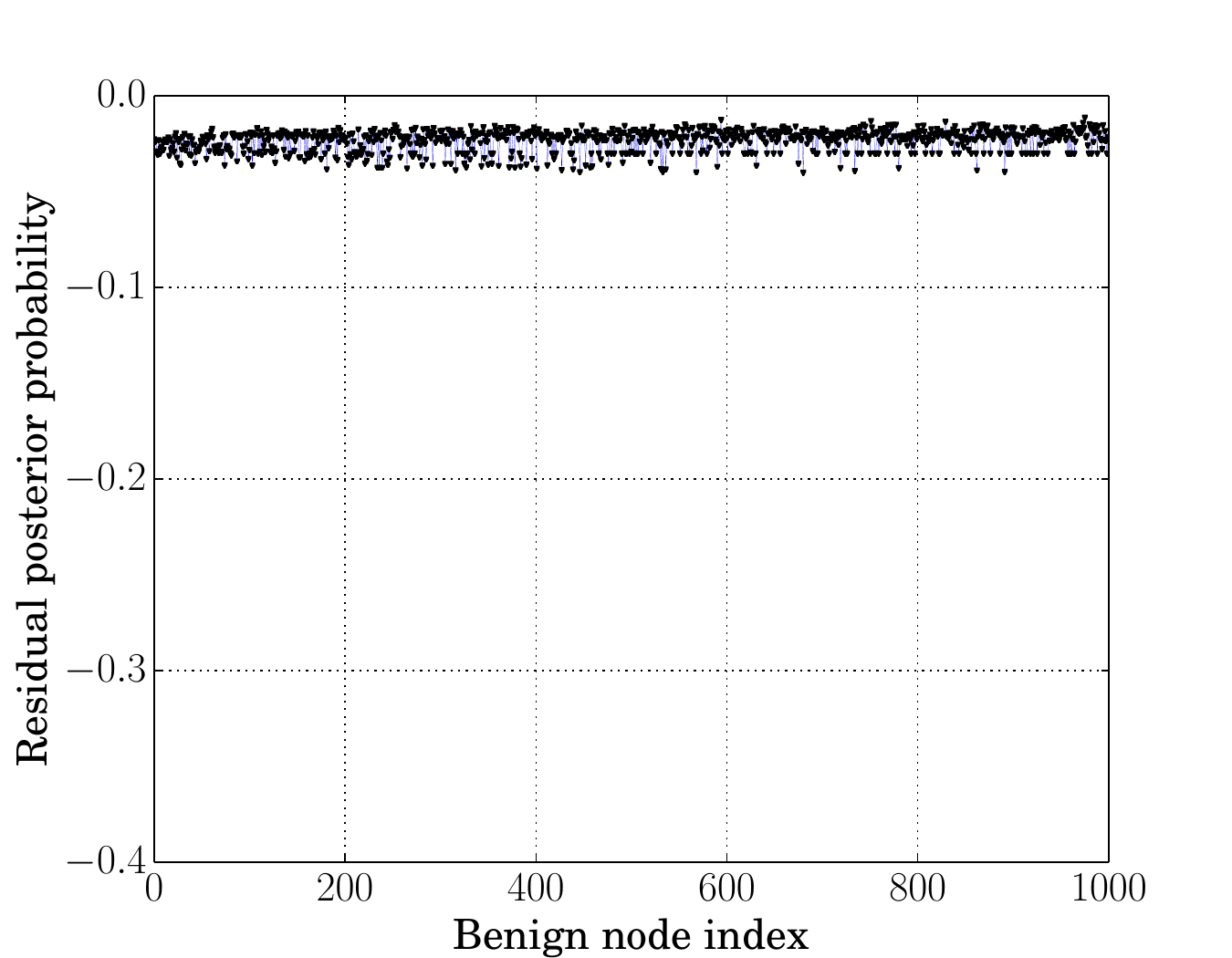}\label{PA}}
\caption{Residual posterior probabilities of unlabeled benign nodes.}
\label{ER-PA}
\end{figure}

Theorem~\ref{theorem:bound} implies that when Sybils are more densely connected among themselves (i.e., the average degree $d(S)$ is larger), it is easier for SybilSCAR to detect them. An intuitional explanation is that, when the Sybil region is more dense, a larger proportion of the residual posterior probabilities would be propagated among the Sybil region. 
Table~\ref{theoryDetectionAccuracy} summarizes the theoretical performance bound of existing structure-based methods. For SybilRank, \'{I}ntegro, and CIA, the metric \emph{\#accepted Sybils} means the number of Sybils that are ranked lower than certain benign nodes. For the rest of methods, \#accepted Sybils means the number of Sybils that are classified as benign. 
As we can see, our SybilSCAR achieves the tightest bound on the number of falsely accepted Sybils. We note that deriving security guarantee for SybilSCAR-C is still an open challenge. However, as we will demonstrate in our experiments, SybilSCAR-C outperforms SybilSCAR-D.

{One key assumption of Theorem~\ref{theorem:bound} is that residual posterior probabilities in the benign region converge after $O(\log |V|)$) iterations. In other words, nodes in the benign region have similar residual posterior probabilities after $O(\log |V|)$) iterations. We validate this assumption via simulations. Specifically, 
we synthesize a benign region and a Sybil region with 1,000 nodes and an average degree of 40 via the Erdos--Renyi (ER) model~\cite{erdos1960evolution} or the Preferential Attachment (PA) model~\cite{Barabasi99}; we randomly add 1,000 attack edges between the two regions; and we randomly label 10 benign nodes as the training set. 
Figure~\ref{ER-PA} shows the residual posterior probabilities of unlabeled benign nodes after $\log |V|$ iterations. We observe that unlabeled benign nodes have similar residual posterior probabilities. }

\subsection{Complexity Analysis}
\label{sec:complexity}

SybilSCAR (both SybilSCAR-C and SybilSCAR-D), state-of-the-art RW-based methods~\cite{Mohaisen11,sybilrank,Yang12-spam}, and LBP-based method~\cite{sybilbelief} have the same space complexity, i.e., $O(|E|)$, and their time complexity is $O(t|E|)$, where $t$ is the number of iterations. 
Although SybilSCAR and SybilBelief (a LBP-based method) have the same asymptotic space and time complexity, SybilSCAR
is several times more space efficient and significantly more time efficient than SybilBelief in practice, as we demonstrate in our experiments. This is because SybilBelief needs to store neighbor influences (i.e., $m_{vu}(x_u)$)  in both directions of every edge and update them in every iteration.
 
\myparatight{Parallel implementation} SybilSCAR, state-of-the-art RW-based methods~\cite{Mohaisen11,sybilrank,Yang12-spam}, and LBP-based methods~\cite{sybilbelief, sybilframe} can be easily implemented in parallel. Specifically, we can divide nodes into groups, and a thread or computer applies the corresponding local rule to a group of nodes iteratively.

\section{Empirical Evaluations}

{We compare our SybilSCAR-C and SybilSCAR-D with SybilRank~\cite{sybilrank}, a state-of-the-art RW-based method, and SybilBelief~\cite{sybilbelief}, a state-of-the-art LBP-based method, in terms of accuracy, robustness to label noise, scalability, and convergence.}

\subsection{Experimental Setup}
\label{exp-setup}

\begin{table}[!t]\renewcommand{\arraystretch}{1.2}
\centering
\caption{Dataset statistics.}
\label{dataset_stat}
\begin{tabular}{|c|c|c|c|} 
 \hline
 {\small \textbf{Dataset}} & {\small \#Nodes} & {\small \#Edges}  &  {\small Ave. degree} \\ \hline
 {\small \textbf{Facebook}} & {\small 4,039} & {\small 88,234} & {\small 43.69} \\ \hline
 {\small \textbf{Enron}}& {\small 33,696} & {\small 180,811} & {\small 10.73} \\ \hline
 {\small \textbf{Epinions}}& {\small 75,877} & {\small 811,478} & {\small 21.39} \\ \hline
 {\small \textbf{Twitter}}& {\small 41,652,230} & {\small 1,202,513,046} & {\small 57.74} \\ \hline
\end{tabular}
\end{table}

\begin{figure*}[!t]
\centering
\subfigure[Facebook]{\includegraphics[width=0.32 \textwidth]{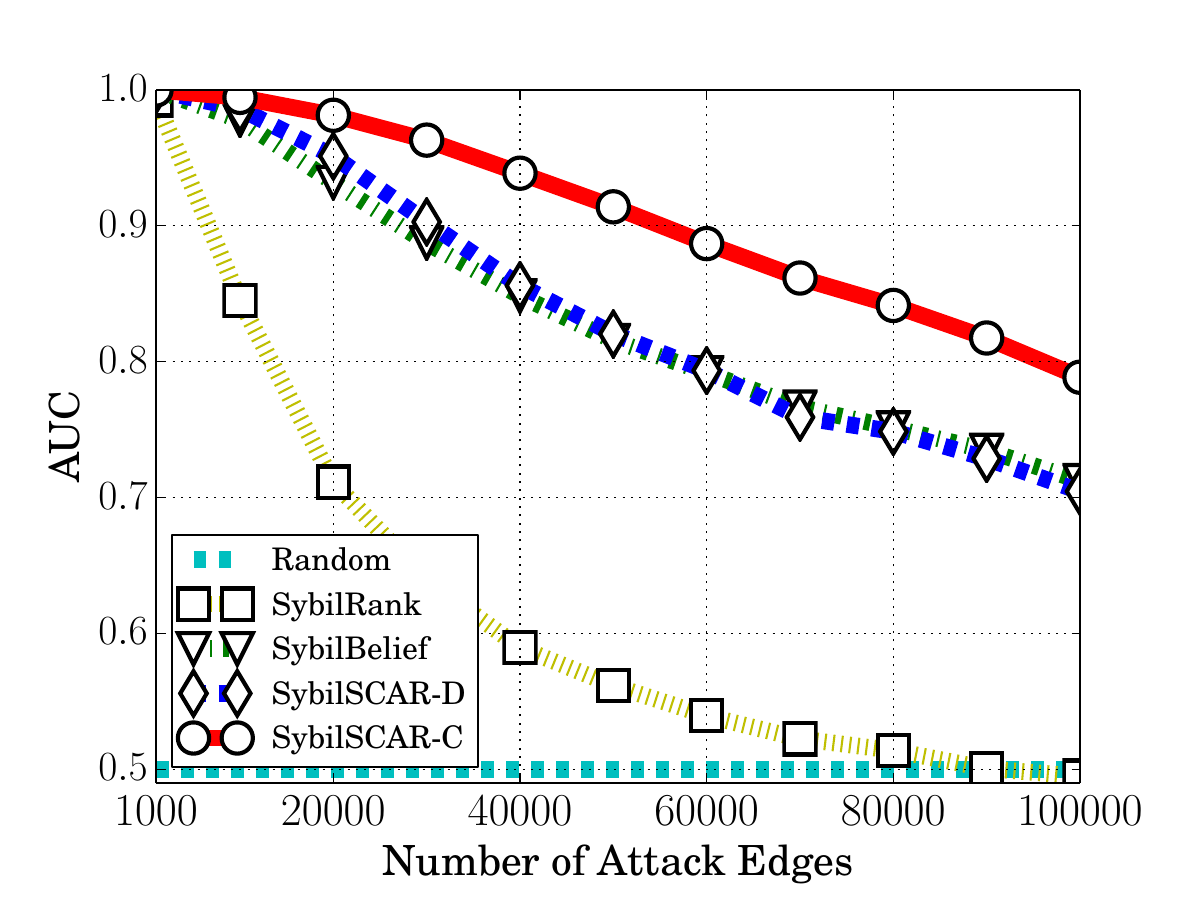} \label{ego-Facebook}}
\subfigure[Enron]{\includegraphics[width=0.32 \textwidth]{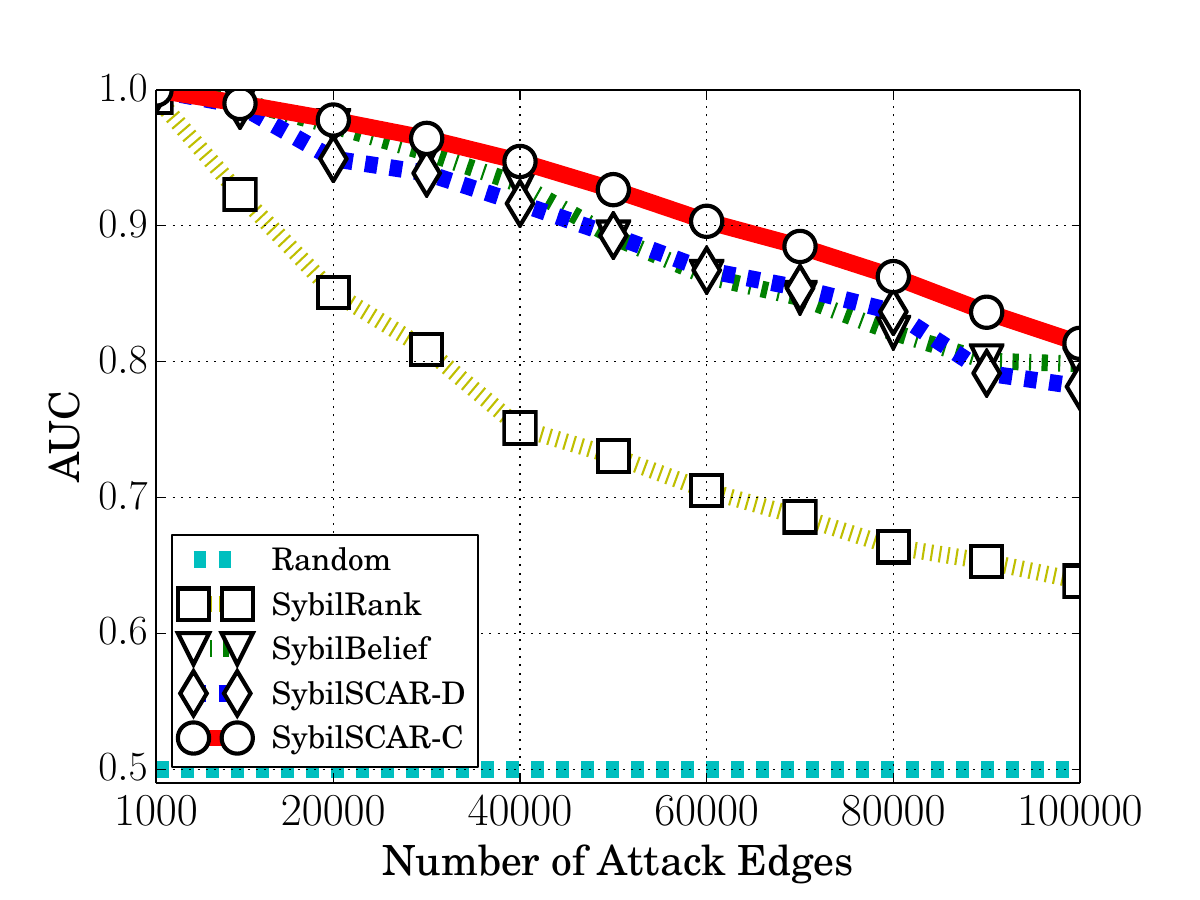} \label{email-Enron}}
\subfigure[Epinions]{\includegraphics[width=0.32 \textwidth]{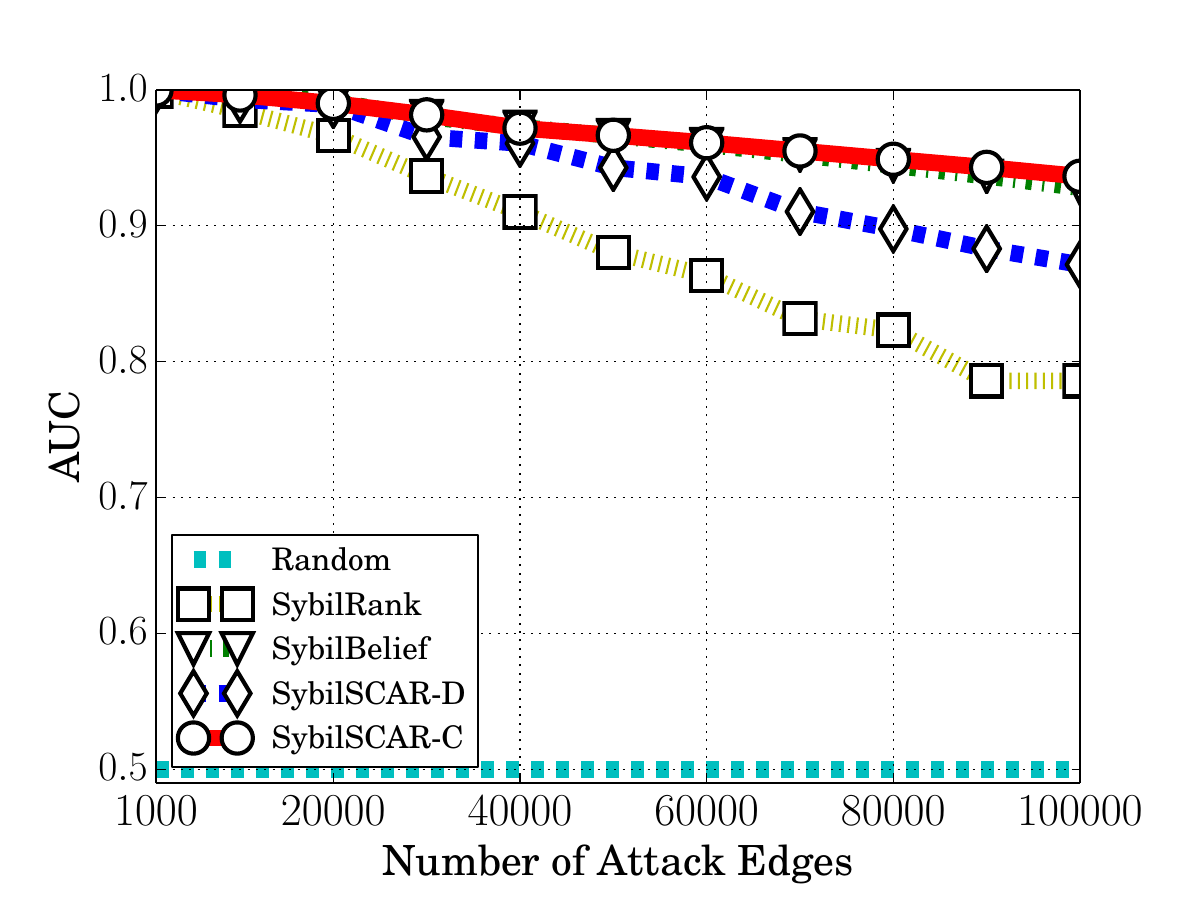} \label{web-Stanford}}
\caption{AUCs of compared methods as the number of attack edges becomes large. SybilSCAR-C and SybilSCAR-D are substantially more accurate than SybilRank, and SybilSCAR-C is slightly more accurate than SybilBelief and SybilSCAR-D, when the number of attack edges is large.} 
\label{auc-synthetic}
\end{figure*}

\subsubsection{Dataset Description}
We use 1) three real-world social networks with synthesized Sybils and 2) a large-scale Twitter dataset with real Sybils for evaluations. 
Table~\ref{dataset_stat} shows some basic statistics about our datasets. 

\myparatight{Social networks with synthesized Sybils} 
We use a real social graph as the benign region while synthesizing the Sybil region and adding attack edges between the two regions uniformly at random. 
There are different ways to synthesize the Sybil region. For instance, we can use a network model (e.g., Preferential Attachment (PA) model~\cite{Barabasi99}) to generate a Sybil region. 
A Sybil region that is synthesized by a network model might be structurally very different from the benign region, e.g.,  although the PA model can generate graphs that have similar degree distribution with real social networks, the generated graphs have very small clustering coefficients, which is very different from real-world social networks. 
Such structural difference could bias Sybil detection results~\cite{alvisiSybil13}.
Moreover, a Sybil region synthesized by a network model like PA does not have community structures, making it unrealistic. 
Therefore, following recent studies~\cite{alvisiSybil13,sybilbelief}, we consider a Sybil attack in which the Sybil region is a replicate of the benign region. 
This way of synthesizing the Sybil region can avoid the structural difference between the two regions, and both Sybil region and benign region have complex community structures. 

We utilize three social networks, i.e., Facebook (4,039 nodes and 88,234 edges), Enron (33,696 nodes and 180,811 edges), and Epinions (75,877 nodes and 811,478 edges), to represent different application scenarios. 
We obtained these datasets from SNAP (http://snap.stanford.edu/ data/index.html).  
A node in Facebook dataset represents a user in Facebook, and two nodes are connected if they are friends.
A node in Enron dataset represents an email address, and an edge between two nodes indicate at least one email was exchanged between the two corresponding email addresses. 
Epinions is a who-trust-whom online social network of a general consumer review site Epinions.com. The nodes in Epinions denote 
members of the site. And in order to maintain quality, Epinsons encourages users to specify which other users they trust, and uses the resulting web of the trust to order the product reviews seem by each person. 
For each social network, we use it as the benign region and replicate it as a Sybil region. Moreover, without otherwise mentioned, we add 1,000 attack edges uniformly at random.

\myparatight{Twitter dataset with real Sybils} 
We obtained a snapshot of a large-scale Twitter follower-followee network crawled by Kwak et al.~\cite{kwak2010twitter}. 
We transformed the follower-followee network into an undirected one via keeping an edge between two users if there are at least one directed edge between them. The undirected Twitter graph has 41,652,230 nodes and 1,202,513,046 edges, with an average degree of 57.74.
To perform evaluation, we need ground truth labels of the users. Since the Twitter network includes users' Twitter IDs, we wrote a crawler to visit each user's profile using Twitter's API, which told us the status (i.e., active, suspended, or deleted)  of each user.
We found that 205,355 users were suspended by Twitter and we treated them as Sybils; 36,156,909 users were still active and we treated them as benign users. The remaining 5,289,966 users were deleted. 
As deleted users could be deleted by Twitter or by users themselves, we could not distinguish the two cases without accessing to Twitter's internal data. Therefore, we treat them as unlabeled users. 
The average number of attack edges per Sybil is 181.55. Therefore, the Twitter network has a very weak homophily.
Note that the number of  benign users and  the number of Sybils are very unbalanced, i.e., the number of labeled benign users is 176 times larger than the number of labeled Sybils.

\myparatight{Training and testing sets} For a social network with synthesized Sybils, we select 200 nodes uniformly at random and use them as a training dataset. For the Twitter dataset, we select 500,000 nodes uniformly at random and use them as a training dataset. The remaining benign and Sybil nodes are used as testing data.

\subsubsection{Compared Methods} 
{We compare SybilSCAR-C and SybilSCAR-D with SybilRank~\cite{sybilrank}, a state-of-the-art RW-based method,  and SybilBelief~\cite{sybilbelief}, a state-of-the-art LBP-based method. In addition, we use random guessing as a baseline.}

For SybilSCAR-C and SybilSCAR-D, we set $\theta=0.1$ to consider possible label noises, i.e., we assign a prior probability 0.6, 0.4, and 0.5 to labeled Sybils, labeled benign users, and unlabeled nodes, respectively. 
We set $\delta=10^{-3}$ and $T=20$. 
Considering different average degrees of Facebook, Enron, Epinions, and Twitter, we set $\hat{w}=0.01$, $0.04$, $0.02$, and $0.01$ for SybilSCAR-C, respectively.  
We set the parameters of SybilRank and SybilBelief according to the papers that introduced them. For instance, for SybilBelief, the edge weight is set to be 0.9 for all edges; 
 SybilRank requires early termination, and we set the number of iterations as $\lceil \log(|V|) \rceil$.

We implemented SybilRank, SybilSCAR-C, and SybilSCAR-D in C++. We obtained a basic implementation of SybilBelief (also in C++) from its authors and optimized the implementation. We performed all our experiments on a Linux machine with 16GB memory and 8 cores.

\subsection{Ranking Accuracy}

\begin{figure}[!t]
	\centering
	\includegraphics[width=0.45\textwidth]{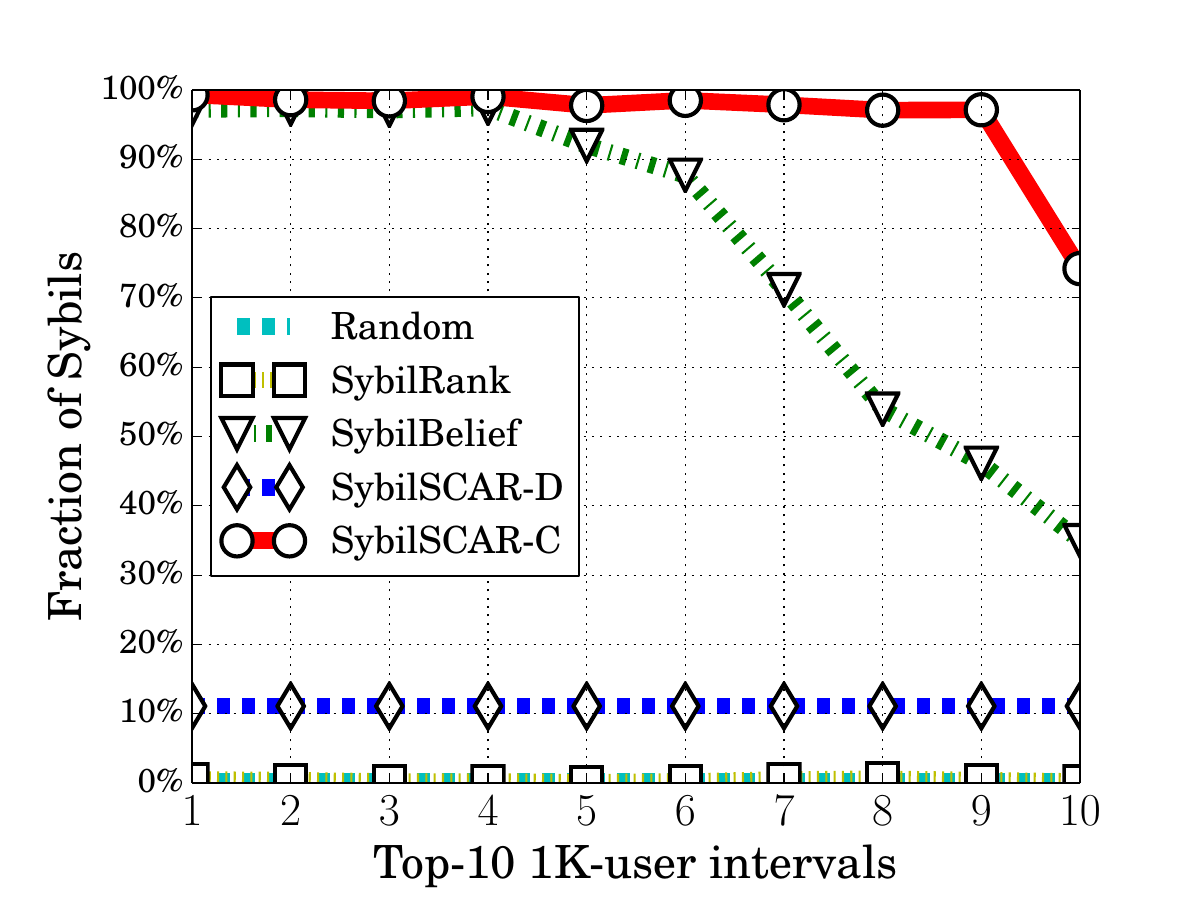}
	\caption{Fraction of Sybils in top ranked intervals on the Twitter dataset. SybilSCAR-C performs better than SybilBelief and SybilSCAR-D, while SybilSCAR-D and SybilBelief perform much better than SybilRank.}
	\label{ranking_large}
\end{figure}

\begin{figure*}[!t]
\centering
\subfigure[Facebook]{\includegraphics[width=0.45\textwidth]{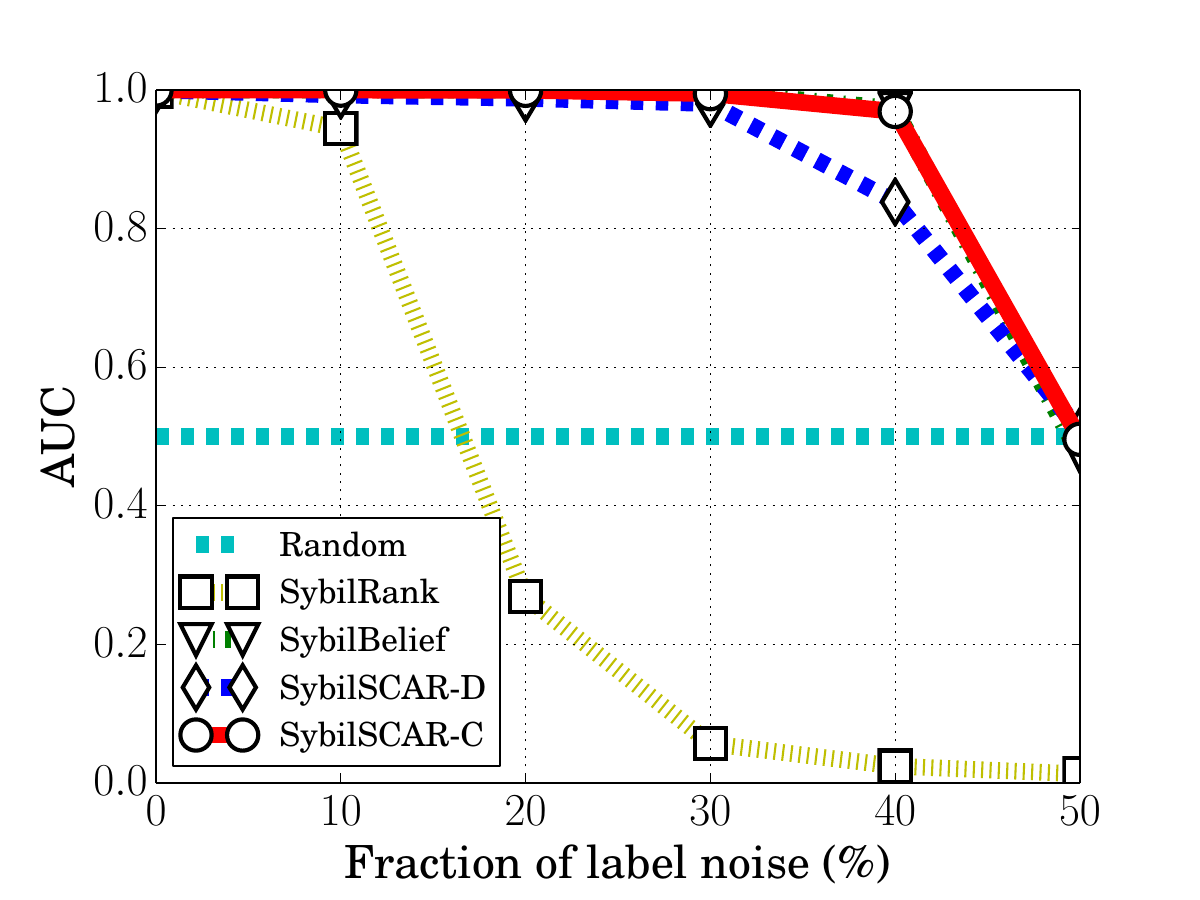}\label{facebook}}
\subfigure[Large Twitter]{\includegraphics[width=0.45\textwidth]{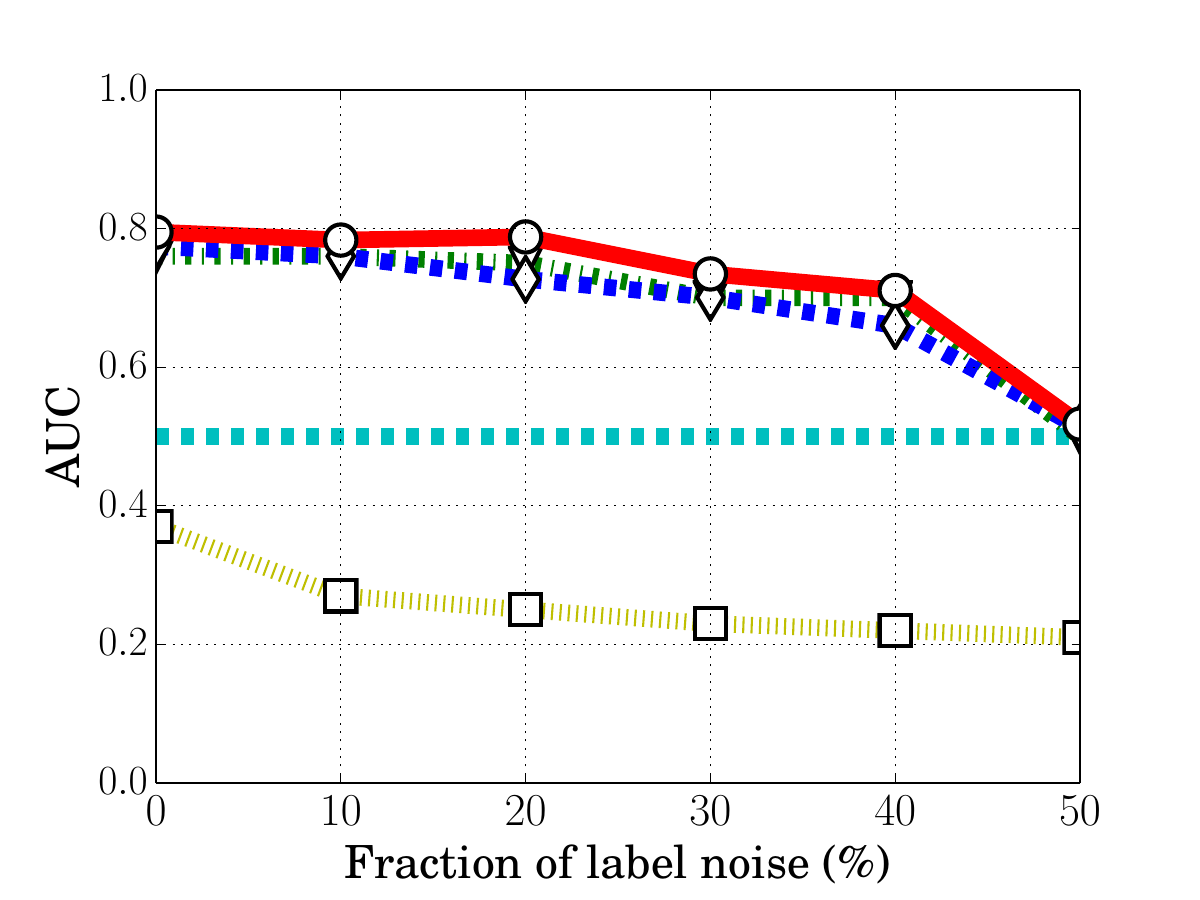}\label{largetwitter}}
\caption{AUCs of SybilRank, SybilBelief, SybilSCAR-D, and SybilSCAR-C vs. level of label noise. SybilSCAR-C is slightly better than SybilSCAR-D and SybilBelief, while they are much more robust to label noise than SybilRank. Note that SybilBelief and SybilSCAR-C overlap in (a).} 
\label{labelnoises-auc}
\end{figure*}

\subsubsection{Results on Social Networks with Synthesized Sybils}
Viswanath et al.~\cite{Viswanath10} demonstrated that Sybil detection methods can be treated as ranking mechanisms, and they can be evaluated using Area Under the Receiver Operating Characteristic Curve (AUC).
Therefore, we adopt AUC  to evaluate ranking accuracy. 
Suppose we rank nodes with respect to their posterior reputation/probability of being a Sybil in a descending order. AUC is the probability that a randomly selected Sybil ranks higher than a randomly selected benign node. Note that random guessing, which ranks all nodes uniformly at random, has an AUC of 0.5. 

Figure~\ref{auc-synthetic} shows AUCs of the compared methods as we increase the number of attack edges from 1,000 to 100,000. 
We have three observations.  
First, when a social network has strong homophily, i.e., the number of attack edges is small,  all the compared methods achieve very high AUCs. For instance, SybilRank, SybilBelief,  SybilSCAR-C, and SybilSCAR-D all achieve AUCs that are close to 1 when the number of attack edges is less than 1,000.
Second, SybilSCAR-C, SybilSCAR-D, and SybilBelief are substantially more accurate than SybilRank when the number of attack edges becomes large, i.e., the social networks have weak homophily. 
{A possible reason is that SybilSCAR-C, SybilSCAR-D, and SybilBelief can leverage both labeled benign users and labeled Sybils in the training dataset.}
{Third, SybilSCAR-C achieves slightly larger AUCs than SybilBelief and SybilSCAR-D.  
 Compared with SybilBelief, SybilSCAR-C uses a new neighbor influence by directly modeling the homophily property of the social network. 
 Compared with SybilSCAR-D, SybilSCAR-C uses a constant weight for all edges, which may make the labeled nodes have larger influence to their neighbors.  
}

\subsubsection{Results on the Large-scale Twitter Dataset}

{The AUCs of SybilRank, SybilBelief, SybilSCAR-D, and SybilSCAR-C are 0.37, 0.76, 0.77, and 0.80, respectively. SybilRank performs worse than random guessing. SybilSCAR-C performs better than SybilSCAR-D, which is comparable with SybilBelief. }  
These results are consistent with those in social networks with synthesized Sybils, because the number of attack edges in the Twitter dataset is very large. 
We note that these AUCs are obtained via using a \emph{balanced} training dataset. Specifically, among the 500,000 nodes in the training dataset, benign nodes are much more than Sybils; we subsample some benign nodes such that we have the same number of  benign nodes and  Sybils, and we use them as a balanced training dataset.  We found that all the methods have very low AUCs (worse than random guessing) if we use the original unbalanced training dataset consisting of the randomly sampled 500,000 nodes. It would be an interesting future work to theoretically understand the impact of balanced/unbalanced training dataset on the accuracy of these methods. 

In practice, the ranking of users can be used as a priority list to guide human workers to manually inspect users and detect Sybils. In particular, inspecting users according to their rankings could aid human workers to detect more Sybils than inspecting randomly picked users, within the same amount of time. When ranking is used for such purpose, the number of Sybils in top-ranked users is important because human workers can only inspect a limited number of users.
AUC measures the overall ranking performance, but it cannot tell Sybils among the top-ranked users. Therefore, we further compare the considered methods using the fraction of Sybils in top-ranked users. 

{Specifically, for each method, we divide the top-10K users obtained by the method into 10 intervals, where each interval has 1K users.  Figure~\ref{ranking_large} shows the fraction of Sybils in each 1K-user interval for the compared methods. 
First, SybilSCAR-C performs better than SybilBelief. Specifically, the fraction of Sybils ranges from 74.2\% to 99.3\% in the top-10 1K-user intervals for SybilSCAR-C, while the range is from 35.0\% to 97.4\% for SybilBelief. 
Second, SybilSCAR-C and SybilBelief outperform SybilSCAR-D, which demonstrates that the predefined constant edge weight is more informative than degree-normalized edge weight for ranking Sybils. 
Third, SybilRank is close to random guessing. } 

We note that SybilWalk~\cite{SybilWalk} was shown to achieve good results on the same Twitter dataset. However, SybilWalk heavily preprocessed the Twitter dataset (e.g., identifying the high-degree nodes that have many attack edges and removing them) to significantly reduce the number of attack edges. We did not perform such preprocessing since it is time-consuming to identify such nodes in practice. Without such preprocessing, SybilWalk has similar performance with SybilSCAR-D.

\begin{figure*}[!t]
\centering
\subfigure[Peak memory usage]{\includegraphics[width=0.45\textwidth]{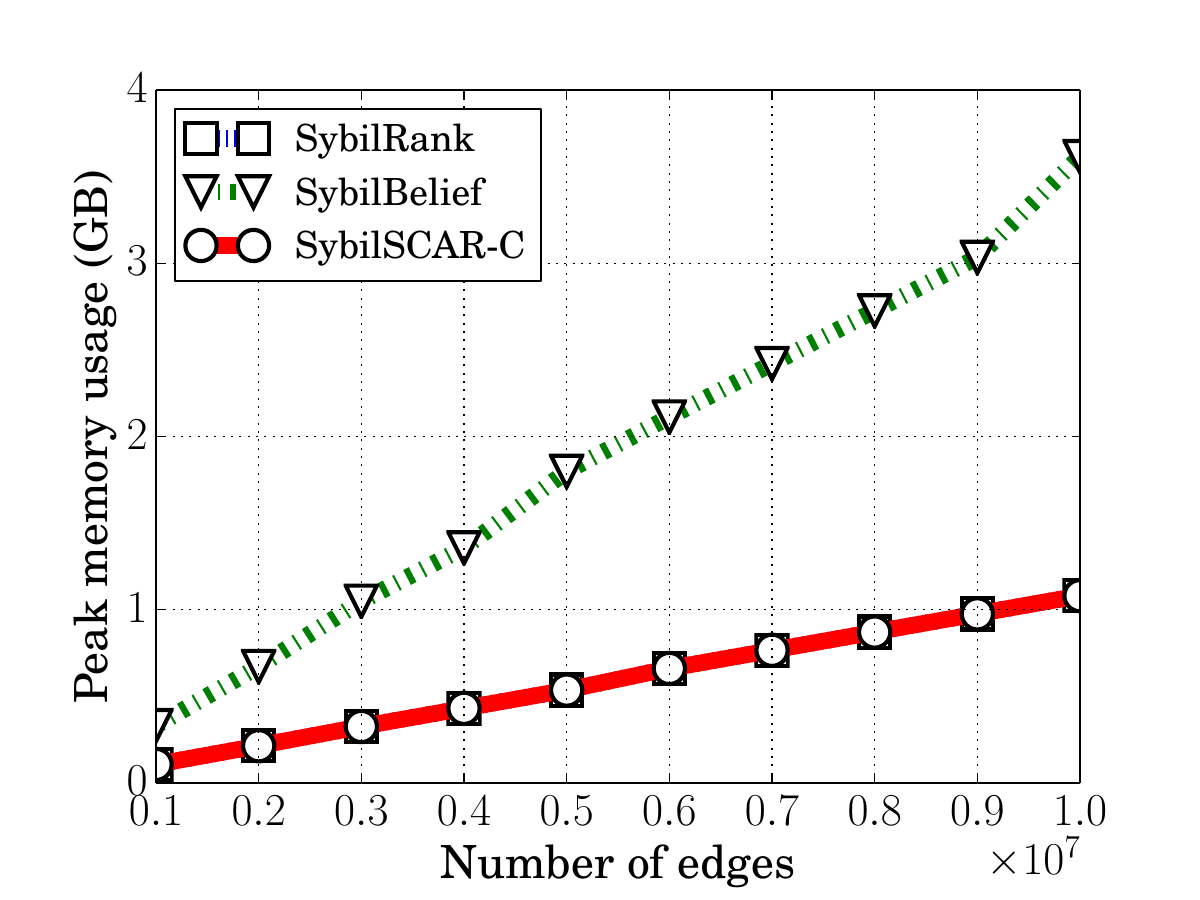}\label{efficiency-space}}
\subfigure[Time]{\includegraphics[width=0.45\textwidth]{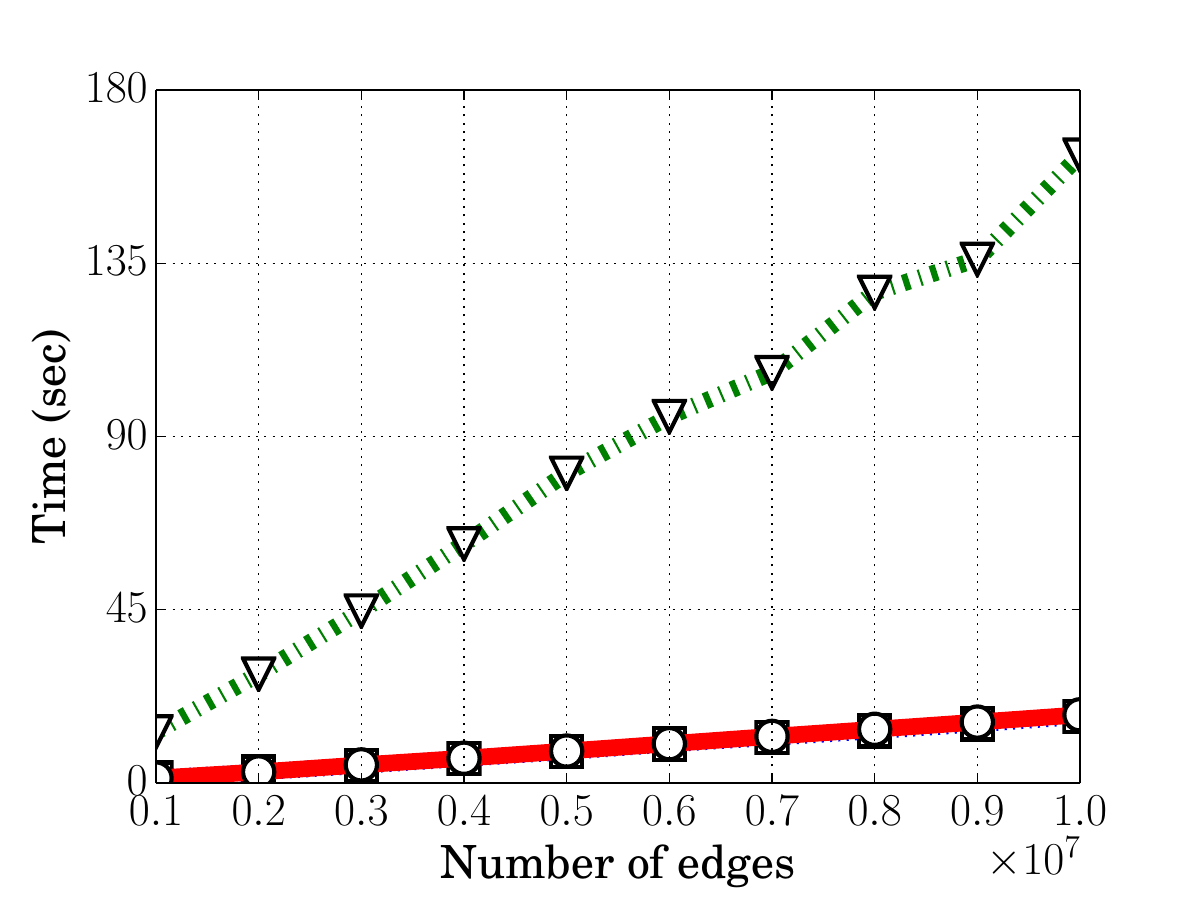}\label{efficiency-time}}
\caption{Space and time efficiency of SybilRank, SybilBelief, and SybilSCAR-C (SybilSCAR-D has the same space and time efficiency with SybilSCAR-C), vs. number of edges. 
SybilSCAR-C and SybilRank have almost the same space and time efficiency, while SybilSCAR-C is several times more space efficient and one order of magnitude more time efficient than SybilBelief.}
\label{efficiency}
\end{figure*}

\subsection{Robustness to label noise} 
In practice, a training dataset might have noises, i.e., some labeled benign users are actually Sybils and some labeled Sybils are actually benign. 
Such noises could be introduced by human mistakes~\cite{wang2012social}.
Thus, one natural question is how label noise impacts the accuracy of detection methods.

For a given level of noise $\tau\%$, we randomly choose $\tau\%$ of labeled Sybils in the training dataset and mislabel them as benign users; and we also sample $\tau\%$ of labeled benign users in the training dataset and mislabel them as Sybils. We vary $\tau\%$ from $10\%$ to $50\%$ with a step size of $10\%$. Note that we didn't perform experiments for $\tau\% > 50\%$ as all these methods cannot detect Sybils when a majority of labels are incorrect.  
{Figure~\ref{labelnoises-auc} shows the AUCs of SybilRank, SybilBelief, SybilSCAR-C, and SybilSCAR-D on Facebook (note that we have similar results on Enron and Epinions, and thus omit them for simplicity) and Twitter datasets
against different levels of label noises.  
We observe that 1) SybilSCAR-C has the best robustness against label noise; 
2) SybilBelief is slightly more robust than SybilSCAR-D to label noise;  
3) SybilSCAR-C, SybilSCAR-D, and SybilBelief are more robust to label noise than SybilRank. For instance, on the Facebook dataset, SybilSCAR-C and SybilBelief can tolerate label noise up to 40\%, SybilSCAR-D can tolerate label noise up to 30\%, while SybilRank performs worse than random guessing when label noise is higher than 20\%. 
} 

We believe it is an interesting future work to theoretically understand the robustness to label noise of different methods. In the following, we provide a possible explanation on why  SybilSCAR-C, SybilSCAR-D, and SybilBelief are more robust to label noise than SybilRank. 
When there are label noises, some benign nodes are treated as Sybils. The edges between these nodes and the rest of benign nodes become attack edges, while the original attack edges that connect with these nodes become edges in the new Sybil region. Similarly, some Sybils are mislabeled as benign nodes. The edges between these nodes and the rest of Sybils are treated as attack edges, while the original attack edges that connect with these nodes become edges in the new benign region. Since we randomly sample mislabeled nodes, the new attack edges  are likely to be more than the original attack edges that become edges within the new benign region or Sybil region. Therefore, SybilSCAR-C, SybilSCAR-D, and SybilBelief outperform SybilRank, because they can tolerate a larger number of attack edges. Moreover, when label noise is larger than a certain threshold (this threshold is graph-dependent), the new attack edges are more than what SybilSCAR-C, SybilSCAR-D, and SybilBelief can tolerate, and thus their performance degrades significantly.

\subsection{Scalability}
We evaluate scalability in terms of the peak memory and time used by each method. Because evaluating scalability requires social networks with varying number of edges, we evaluate scalability on synthesized graphs with different number of edges. 
Note that our purpose here is not to concern about the accuracy, which depends on the number of iterations of each method. Thus, to avoid the bias introduced by the number of iterations, we run all methods with the same number of iterations. 

Figure~\ref{efficiency} exhibits the peak memory and time used by SybilRank, SybilBelief, and SybilSCAR-C  (SybilSCAR-D has the same complexity with SybilSCAR-C, and thus we omit its results for simplicity) for different number of edges with 20 iterations. 
We observe that: 1) all methods have linear space and time complexity, which is consistent with our theoretical analysis in Section~\ref{sec:complexity};
2) SybilRank and SybilSCAR-C use almost the same space and time; 
3) SybilSCAR-C requires a few times less memory than SybilBelief and is one order of magnitude faster than SybilBelief.  
The reason is that SybilBelief needs a large amount of resources to store and maintain the neighbor influence on every edge. 
We note that we optimized the implementation of SybilBelief provided by its authors, and our optimized version is one order of magnitude faster than the unoptimized version. 
Using the unoptimized implementation, SybilSCAR is around two orders of magnitude faster than SybilBelief, which was reported in~\cite{sybilscar}.

\subsection{Convergence}
We define a relative error of residual posterior probability vectors of SybilSCAR-C (or SybilSCAR-D) as $\frac{\|\hat{\mathbf{p}}^{(t)} - \hat{\mathbf{p}}^{(t-1)}\|_1} {\|\hat{\mathbf{p}}^{(t)}\|_1}$, where $\hat{\mathbf{p}}^{(t)}$ is the residual vector of posterior probability produced by SybilSCAR-C (or SybilSCAR-D) in the $t$th iteration. Similarly, we can define relative errors for SybilRank and SybilBelief using their vectors of posterior reputation/probability. 
{Figure~\ref{convergence} shows the relative errors vs. the number of iterations on Facebook with 1,000 attack edges. We observe that 1) SybilSCAR-C, SybilSCAR-D, and SybilRank converge after several iterations; 
and 2) the relative errors of SybilBelief oscillate. 
SybilBelief does not converge because there exists many loops in real-world social networks and  LBP may oscillate on  graphs with loops,  as pointed out by the author of LBP~\cite{Pearl88}.  
}

\begin{figure}[t]
	\centering
	\includegraphics[width=0.45\textwidth]{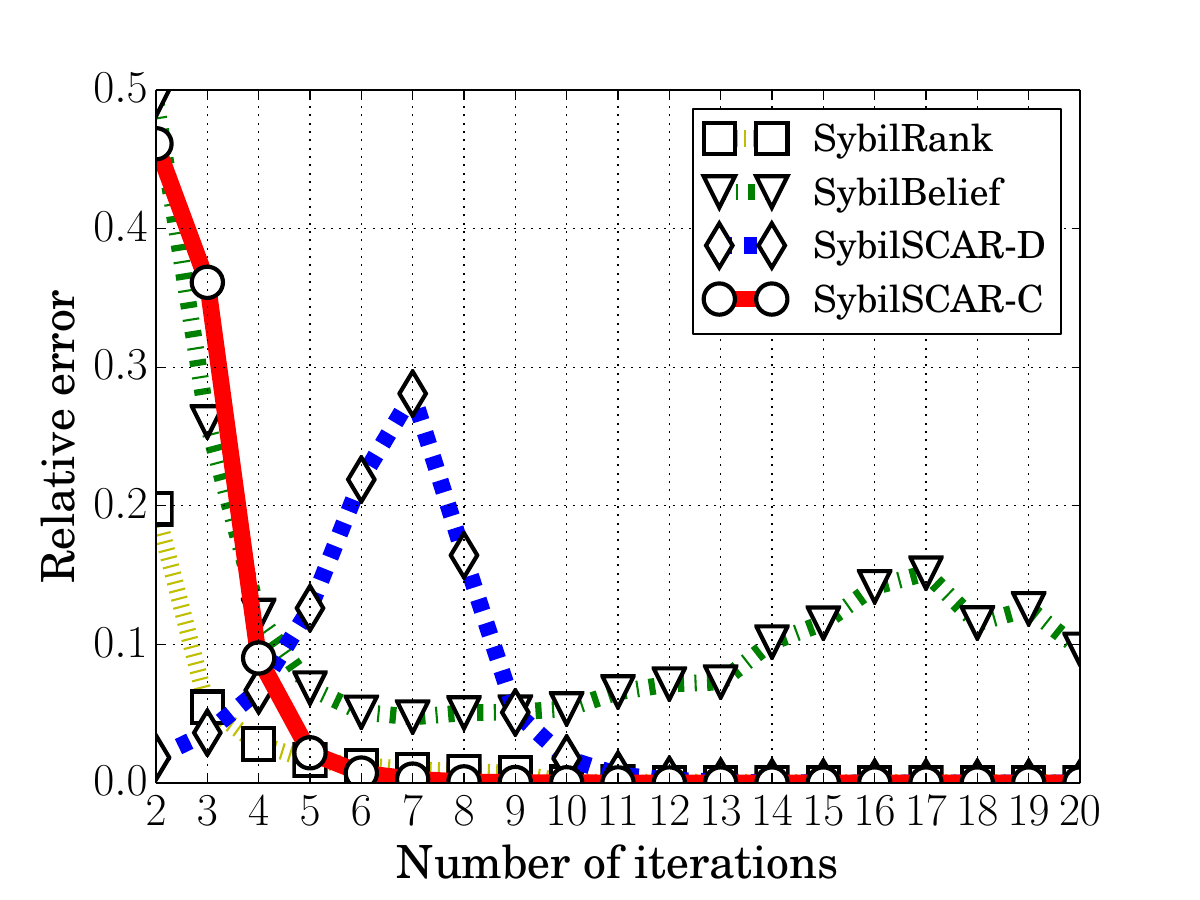}
	\caption{Relative errors of SybilRank, SybilBelief, SybilSCAR-D, and SybilSCAR-C vs. the number of iterations on Facebook with 1,000 attack edges. SybilRank, SybilSCAR-D, and SybilSCAR-C can converge, but SybilBelief cannot.}
	\label{convergence}
\end{figure}

\begin{figure*}[!t]
\centering
\subfigure[SybilSCAR-D]{\includegraphics[width=0.45\textwidth]{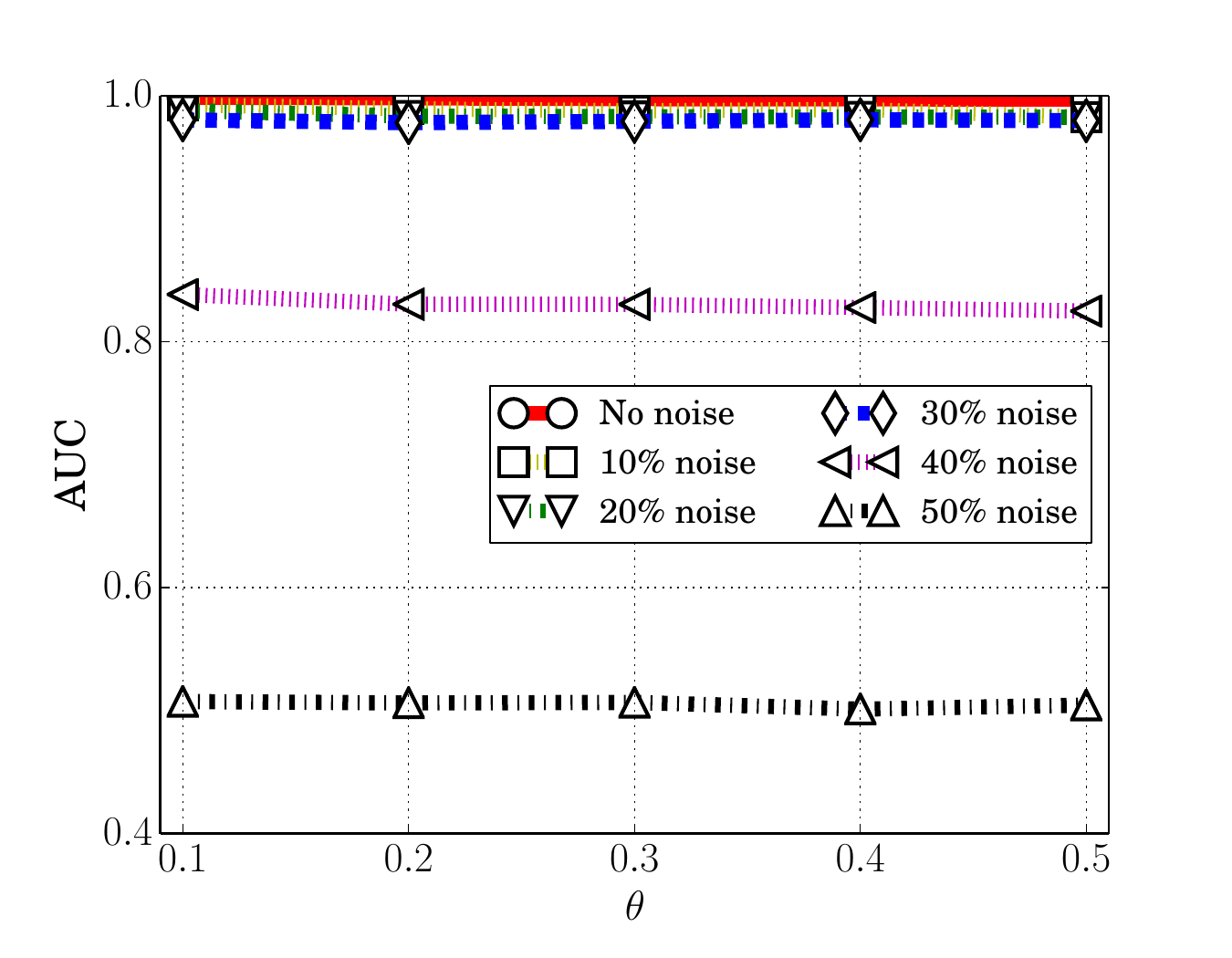}}
\subfigure[SybilSCAR-C]{\includegraphics[width=0.45\textwidth]{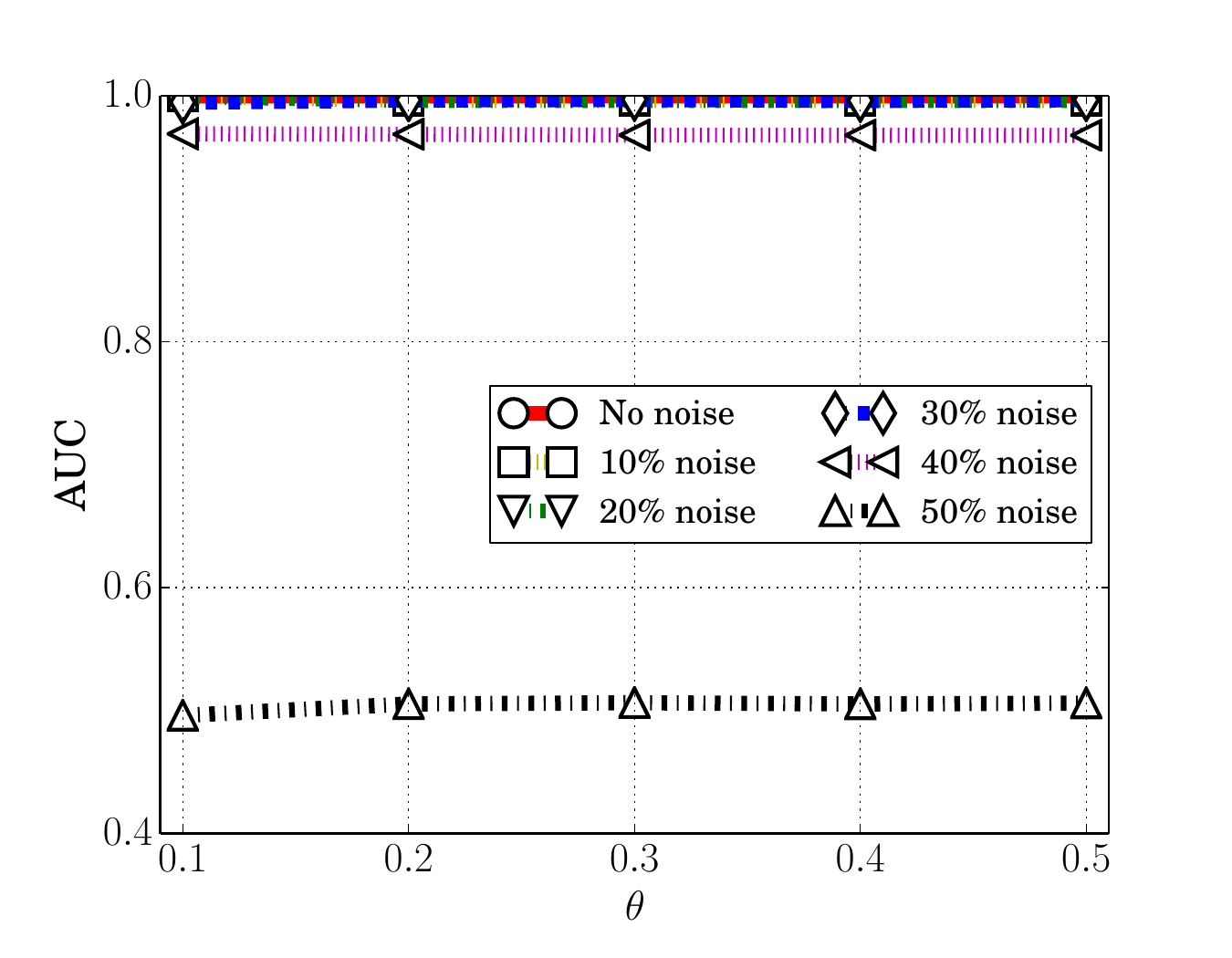}}
\caption{AUCs of (a) SybilSCAR-D and (b) SybilSCAR-C for different levels of label noise and different values for the parameter $\theta$.}
\label{theta}
\end{figure*}

\subsection{Impact of the Parameter $\theta$}
The parameter $\theta$ is the residual prior probability of labeled nodes. Figure~\ref{theta} shows the AUCs of SybilSCAR-D and SybilSCAR-C for different $\theta$ and different levels of label noise, where the dataset is Facebook. Note that $\theta$ should be in the range (0, 0.5]. Therefore, we explored the values 0.1, 0.2, 0.3, 0.4, and 0.5. We observe that both SybilSCAR-D and SybilSCAR-C are stable with respect to the choice of $\theta$.

\subsection{Summary} 
We summarize our key observations as follows: 
{
\begin{itemize}
\item Compared to SybilRank, SybilSCAR-C and SybilSCAR-D are substantially more accurate and more robust to label noise. 
\item Compared to SybilBelief, SybilSCAR-C is more accurate, significantly more scalable, and guaranteed to converge.
\item SybilSCAR-C outperforms SybilSCAR-D, showing that a constant edge weight is more informative than degree-normalized edge weight for Sybil detection. 
\item SybilSCAR-C and SybilSCAR-D are stable with respect to the prior probabilities of the labeled nodes. 
\end{itemize}
}

\section{Conclusion and Future Work}

In this work, we first propose a local rule based framework to unify state-of-the-art Random Walk (RW)-based  
and Loopy Belief Propagation (LBP)-based Sybil detection methods. 
Our framework makes it possible to analyze and compare different Sybil detection methods in a unified way. 
Second, we design a new local rule. Our local rule integrates advantages of RW-based methods and LBP-based methods, while overcoming their limitations. 
Third, we perform both theoretical and empirical evaluations.
Theoretically, SybilSCAR has a tighter asymptotical bound on the number of Sybils that are falsely accepted into the social network than existing structure-based methods. Moreover, SybilSCAR can guarantee to converge. 
Empirically, our experimental results on both synthesized Sybils and real-world Sybils demonstrate that SybilSCAR is more accurate and more robust to label noise than SybilRank, while SybilSCAR is more accurate and significantly more scalable than SybilBelief.  

Future research directions include 1) learning the homophily strength for each edge; 
2) theoretically analyzing different local rules with respect to accuracy and robustness to label noise; 
3) theoretically understanding the impacts of balanced/unbalanced training dataset; 
and 4) applying SybilSCAR to detect other types of Sybils such as web spams, fake reviews, fake likes, etc.

\bibliographystyle{IEEEtran}
\bibliography{refs}

\begin{thebibliography}{10}
\providecommand{\url}[1]{#1}
\csname url@samestyle\endcsname
\providecommand{\newblock}{\relax}
\providecommand{\bibinfo}[2]{#2}
\providecommand{\BIBentrySTDinterwordspacing}{\spaceskip=0pt\relax}
\providecommand{\BIBentryALTinterwordstretchfactor}{4}
\providecommand{\BIBentryALTinterwordspacing}{\spaceskip=\fontdimen2\font plus
\BIBentryALTinterwordstretchfactor\fontdimen3\font minus
  \fontdimen4\font\relax}
\providecommand{\BIBforeignlanguage}[2]{{%
\expandafter\ifx\csname l@#1\endcsname\relax
\typeout{** WARNING: IEEEtran.bst: No hyphenation pattern has been}%
\typeout{** loaded for the language `#1'. Using the pattern for}%
\typeout{** the default language instead.}%
\else
\language=\csname l@#1\endcsname
\fi
#2}}
\providecommand{\BIBdecl}{\relax}
\BIBdecl

\bibitem{gao2012towards}
H.~Gao, Y.~Chen, K.~Lee, D.~Palsetia, and A.~Choudhary, ``Towards online spam
  filtering in social networks,'' in \emph{NDSS}, 2012.

\bibitem{alexaStat}
\BIBentryALTinterwordspacing
{Facebook Popularity.} (2015, October). [Online]. Available:
  \url{http://www.alexa.com/topsites}
\BIBentrySTDinterwordspacing

\bibitem{Twittersybil}
{Sybils in Twitter},
  ``\url{http://www.nbcnews.com/technology/1-10-twitter-accounts-fake-say-researchers-2d11655362}.''

\bibitem{election}
\BIBentryALTinterwordspacing
{Hacking Election.} (2016, May). [Online]. Available:
  \url{http://goo.gl/G8o9x0}
\BIBentrySTDinterwordspacing

\bibitem{stock}
\BIBentryALTinterwordspacing
{Hacking Financial Market.} (2016, May). [Online]. Available:
  \url{http://goo.gl/4AkWyt}
\BIBentrySTDinterwordspacing

\bibitem{Thomas11}
K.~Thomas, C.~Grier, J.~Ma, V.~Paxson, and D.~Song, ``Design and evaluation of
  a real-time url spam filtering service,'' in \emph{IEEE S \& P}, 2011.

\bibitem{Bilge09}
L.~Bilge, T.~Strufe, D.~Balzarotti, and E.~Kirda, ``All your contacts are
  belong to us: Automated identity theft attacks on social networks,'' in
  \emph{WWW}, 2009.

\bibitem{sybilrank}
Q.~Cao, M.~Sirivianos, X.~Yang, and T.~Pregueiro, ``Aiding the detection of
  fake accounts in large scale social online services,'' in \emph{NSDI}, 2012.

\bibitem{Yu06}
H.~Yu, M.~Kaminsky, P.~B. Gibbons, and A.~Flaxman, ``Sybilguard: defending
  against sybil attacks via social networks,'' in \emph{ACM SIGCOMM}.\hskip 1em
  plus 0.5em minus 0.4em\relax ACM, 2006.

\bibitem{Yu08}
H.~Yu, P.~B. Gibbons, M.~Kaminsky, and F.~Xiao, ``{SybilLimit}: A near-optimal
  social network defense against {Sybil} attacks,'' in \emph{IEEE S \& P},
  2008.

\bibitem{Danezis09}
G.~Danezis and P.~Mittal, ``{SybilInfer}: Detecting {Sybil} nodes using social
  networks,'' in \emph{NDSS}, 2009.

\bibitem{Mohaisen11}
A.~Mohaisen, N.~Hopper, and Y.~Kim, ``Keep your friends close: Incorporating
  trust into social network-based sybil defenses,'' in \emph{IEEE INFOCOM},
  2011.

\bibitem{Yang12-spam}
C.~Yang, R.~Harkreader, J.~Zhang, S.~Shin, and G.~Gu, ``Analyzing spammer's
  social networks for fun and profit,'' in \emph{WWW}, 2012.

\bibitem{integro}
Y.~Boshmaf, D.~Logothetis, G.~Siganos, J.~Leria, J.~Lorenzo, M.~Ripeanu, and
  K.~Beznosov, ``Integro: Leveraging victim prediction for robust fake account
  detection in osns,'' in \emph{NDSS}, 2015.

\bibitem{smartwalk}
Y.~Liu, S.~Ji, and P.~Mittal, ``Smartwalk: Enhancing social network security
  via adaptive random walks,'' in \emph{ACM CCS}, 2016.

\bibitem{SybilWalk}
J.~Jia, B.~Wang, and N.~Z. Gong, ``Random walk based fake account detection in
  online social networks,'' in \emph{DSN}, 2017.

\bibitem{zhang2016truetop}
J.~Zhang, R.~Zhang, J.~Sun, Y.~Zhang, and C.~Zhang, ``Truetop: A
  sybil-resilient system for user influence measurement on twitter,''
  \emph{IEEE/ACM ToN}, 2016.

\bibitem{sybilbelief}
N.~Z. Gong, M.~Frank, and P.~Mittal, ``Sybilbelief: A semi-supervised learning
  approach for structure-based sybil detection,'' \emph{IEEE TIFS}, vol.~9,
  no.~6, pp. 976--987, 2014.

\bibitem{sybilframe}
P.~Gao, B.~Wang, N.~Z. Gong, S.~Kulkarni, and P.~Mittal, ``Sybilfuse: Combining
  local attributes with global structure to perform robust sybil detection,''
  \emph{MIS2}, 2018.

\bibitem{robustspammer}
H.~Fu, X.~Xie, Y.~Rui, N.~Z. Gong, G.~Sun, and E.~Chen, ``Robust spammer
  detection in microblogs: Leveraging user carefulness,'' \emph{ACM TIST},
  2017.

\bibitem{gang}
B.~Wang, N.~Z. Gong, and H.~Fu, ``Gang: Detecting fraudulent users in online
  social networks via guilt-by-association on directed graphs,'' in
  \emph{ICDM}, 2017.

\bibitem{Pearl88}
J.~Pearl, \emph{Probabilistic reasoning in intelligent systems: networks of
  plausible inference}, 1988.

\bibitem{wang2012social}
G.~Wang, M.~Mohanlal, C.~Wilson, X.~Wang, M.~Metzger, H.~Zheng, and B.~Y. Zhao,
  ``Social turing tests: Crowdsourcing sybil detection,'' \emph{NDSS}, 2013.

\bibitem{Yang11-sybil}
Z.~Yang, C.~Wilson, X.~Wang, T.~Gao, B.~Y. Zhao, and Y.~Dai, ``Uncovering
  social network {Sybils} in the wild,'' in \emph{ACM IMC}, 2011.

\bibitem{Wang10}
A.~H. Wang, ``Don't follow me - spam detection in twitter,'' in \emph{SECRYPT
  2010}, 2010.

\bibitem{yardi10}
G.~S. S.~Yardi, D.~Romero and D.~Boyd, ``Detecting spam in a {Twitter}
  network,'' \emph{First Monday}, vol. 15(1), 2010.

\bibitem{LeeUncovering10}
K.~Lee, J.~Caverlee, and S.~Webb, ``Uncovering social spammers: Social
  honeypots + machine learning,'' in \emph{ACM SIGIR}, 2010.

\bibitem{benevenuto2010detecting}
F.~Benevenuto, G.~Magno, T.~Rodrigues, and V.~Almeida, ``Detecting spammers on
  twitter,'' in \emph{CEAS}, 2010.

\bibitem{Song11}
J.~Song, S.~Lee, and J.~Kim, ``Spam filtering in {Twitter} using
  sender-receiver relationship,'' in \emph{RAID}, 2011.

\bibitem{facebookImmune}
T.~Stein, E.~Chen, and K.~Mangla, ``Facebook immune system,'' in \emph{SNS},
  2011.

\bibitem{Wang13Clickstream}
G.~Wang, T.~Konolige, C.~Wilson, and X.~Wang, ``You are how you click:
  Clickstream analysis for sybil detection,'' in \emph{Usenix Security}, 2013.

\bibitem{CaoCCS14}
Q.~Cao, X.~Yang, J.~Yu, and C.~Palow, ``Uncovering large groups of active
  malicious accounts in online social networks,'' in \emph{ACM CCS}, 2014.

\bibitem{cao2015combating}
Q.~Cao, M.~Sirivianos, X.~Yang, and K.~Munagala, ``Combating friend spam using
  social rejections,'' in \emph{ICDCS}.\hskip 1em plus 0.5em minus 0.4em\relax
  IEEE, 2015, pp. 235--244.

\bibitem{Wang13}
G.~Wang, M.~Mohanlal, C.~Wilson, X.~Wang, M.~Metzger, H.~Zheng, and B.~Y. Zhao,
  ``Social turing tests: Crowdsourcing {Sybil} detection,'' in \emph{NDSS},
  2013.

\bibitem{DanezisAnonyComPET10}
G.~Danezis, C.~Diaz, C.~Troncoso, and B.~Laurie, ``Drac: An architecture for
  anonymous low-volume communications,'' in \emph{PETS}, 2010.

\bibitem{yang2014uncovering}
Z.~Yang, C.~Wilson, X.~Wang, T.~Gao, B.~Y. Zhao, and Y.~Dai, ``Uncovering
  social network sybils in the wild,'' \emph{ACM TKDD}, 2014.

\bibitem{wilson:eurosys09}
C.~Wilson, B.~Boe, A.~Sala, K.~P. Puttaswamy, and B.~Y. Zhao, ``User
  interactions in social networks and their implications,'' in \emph{Eurosys},
  2009.

\bibitem{gilbert:chi09}
E.~Gilbert and K.~Karahalios, ``Predicting tie strength with social media,'' in
  \emph{CHI}, 2009.

\bibitem{sybildefender}
W.~Wei, F.~Xu, C.~Tan, and Q.~Li, ``{SybilDefender}: Defend against {Sybil}
  attacks in large social networks,'' in \emph{IEEE INFOCOM}, 2012.

\bibitem{wang2010poisonedwater}
Y.~Wang and A.~Nakao, ``Poisonedwater: An improved approach for accurate
  reputation ranking in p2p networks,'' \emph{FGCS}, vol.~26, no.~8, pp.
  1317--1326, 2010.

\bibitem{alvisiSybil13}
L.~Alvisi, A.~Clement, A.~Epasto, S.~Lattanzi, and A.~Panconesi, ``Sok: The
  evolution of sybil defense via social networks,'' in \emph{IEEE S \& P},
  2013.

\bibitem{saad2003iterative}
Y.~Saad, \emph{Iterative methods for sparse linear systems}.\hskip 1em plus
  0.5em minus 0.4em\relax Siam, 2003.

\bibitem{derzko1965bounds}
N.~Derzko and A.~Pfeffer, ``Bounds for the spectral radius of a matrix,''
  \emph{Mathematics of Computation}, vol.~19, no.~89, 1965.

\bibitem{levin2009markov}
D.~A. Levin, Y.~Peres, and E.~L. Wilmer, \emph{Markov chains and mixing
  times}.\hskip 1em plus 0.5em minus 0.4em\relax American Mathematical Soc.,
  2009.

\bibitem{erdos1960evolution}
P.~Erdos and A.~R{\'e}nyi, ``On the evolution of random graphs,'' \emph{Publ.
  Math. Inst. Hung. Acad. Sci}, vol.~5, no.~1, pp. 17--60, 1960.

\bibitem{Barabasi99}
A.-L. Barab{\'a}si and R.~Albert, ``Emergence of scaling in random networks,''
  \emph{Science}, vol. 286, 1999.

\bibitem{kwak2010twitter}
H.~Kwak, C.~Lee, H.~Park, and S.~Moon, ``What is twitter, a social network or a
  news media?'' in \emph{Proceedings of the 19th international conference on
  World wide web}.\hskip 1em plus 0.5em minus 0.4em\relax ACM, 2010, pp.
  591--600.

\bibitem{Viswanath10}
B.~Viswanath, A.~Post, K.~P. Gummadi, and A.~Mislove, ``An analysis of social
  network-based {Sybil} defenses,'' in \emph{ACM SIGCOMM}, 2010.

\bibitem{sybilscar}
B.~Wang, L.~Zhang, and N.~Z. Gong, ``Sybilscar: Sybil detection in online
  social networks via local rule based propagation,'' in \emph{IEEE INFOCOM},
  2017.

\end{thebibliography}


%



\appendices

\section{Proof of Theorem~\ref{theorem_1}}
\label{app:theorem1}

We denote $\mathcal{Z}_u = {q_u \prod_{v \in \Gamma(u)} f_{vu} + (1-q_u) \prod_{v \in \Gamma(u)} (1-f_{vu})}$. 
Rewriting $p_u = \frac{1}{\mathcal{Z}_u} q_u \prod_{v \in \Gamma(u)} f_{vu}$ with the corresponding residual variables yields
\begin{equation*}
\small
\begin{split}
& 0.5+\hat{p}_{u} = \frac{1}{\mathcal{Z}_u} \big(0.5+\hat{q}_{u}\big) \prod_{v \in \Gamma(u)} \big(0.5 + \hat{f}_{vu} \big)  \\
& \Longrightarrow \ln (1+2\hat{p}_{u}) = -\ln \mathcal{Z}_u + \ln (1+2\hat{q}_{u}) + \sum_{v \in \Gamma(u)} \ln \big(0.5 + \hat{f}_{vu} \big) \\
& = -\ln \mathcal{Z}_u + \ln (1+2\hat{q}_{u}) + \sum_{v \in \Gamma(u)} \ln \big(0.5 \big) + \sum_{v \in \Gamma(u)} \ln (1 + 2 \hat{f}_{vu}) \\ 
\end{split}
\end{equation*}
Using approximation $\ln (1+x) \approx x$ when $x$ is small, we have: 
\begin{equation}
\label{approx_sybil}
2 \hat{p}_{u} = -\ln \mathcal{Z}_u  + 2\hat{q}_{u} +  |\Gamma(u)| \cdot \ln (0.5) + \sum_{v \in \Gamma(u)} 2 \hat{f}_{vu}.
\end{equation}

Similarly, via rewriting $1-p_u = \frac{1}{\mathcal{Z}_u} (1-q_u) \prod_{v \in \Gamma(u)} (1-f_{vu})$ with the corresponding residual variables and using approximation $\ln (1-x) \approx -x$ when $x$ is small, we have:
\begin{equation}
\label{approx_benign}
-2 \hat{p}_{u} = -\ln \mathcal{Z}_u -2\hat{q}_{u} + |\Gamma(u)| \cdot \ln (0.5) - \sum_{v \in \Gamma(u)} 2 \hat{f}_{vu}.
\end{equation}

Adding Equation~\ref{approx_sybil} with Equation~\ref{approx_benign} yields $\ln \mathcal{Z}_u = |\Gamma(u)| \cdot \ln (0.5)$. Via substituting this relation into Equation~\ref{approx_sybil} or Equation~\ref{approx_benign}, we have: 
\begin{equation}
\hat{p}_{u}  = \hat{q}_{u} + \sum_{v \in \Gamma(u)} \hat{f}_{vu}.
\end{equation}

\section{Proof of Theorem~\ref{theorem:bound}}
\label{app:analysis}

\myparatight{Overview} We leverage the classic analysis methods proposed by the authors of SybilRank. Specifically, SybilRank proposed the following three classic steps: 1) modeling the exchange of trust scores between benign region and Sybil region in each iteration, 2) modeling the trust score dynamics in the benign and Sybil regions, and 3) assuming the increased trust scores in the Sybil region all focus on a small group of Sybils. We follow these three steps to analyze residual posterior probabilities in the simplified version of SybilSCAR-D. However, one key difference is that SybilSCAR-D uses a different local rule with SybilRank, and thus the mathematical details in all the three steps are different.

\myparatight{Notations} We denote by $\mathcal{B}$ and $\mathcal{S}$ the set of benign nodes and Sybils, respectively. 
We denote by $d(\mathcal{B})$ and $d(\mathcal{S})$ the average degree of benign nodes and Sybil nodes, respectively.  
We denote by $|\mathcal{B}|$ and $|\mathcal{S}|$ the number of benign nodes and Sybils, respectively. 
For a node set $\mathcal{N}$, we denote its \emph{volume} as the sum of degrees of nodes in 
$\mathcal{N}$, i.e., $Vol(\mathcal{N})$ = $\sum_{u \in \mathcal{N}}$ $d_u$.
Moreover, we have 
 {\small
 \begin{align}
 & C_\mathcal{B} = \frac{g}{Vol(\mathcal{B})}, \quad
 C_\mathcal{S} = \frac{g}{Vol(\mathcal{S})},
\end{align}
}%
which were introduced by SybilRank.
We denote by $\hat{P}_\mathcal{B}^{(t)}$ and $\hat{P}_\mathcal{S}^{(t)}$ the average residual posterior probability of benign nodes and Sybils  in the $t${th} iteration, respectively. 
Initially, $\hat{P}_\mathcal{S}^{(0)} = 0$ (since we do not consider labeled Sybils in the training dataset) and $\hat{P}_\mathcal{B}^{(0)} < 0$.

\myparatight{Exchange of residual posterior probabilities between benign region and Sybil region}
In the $(t+1)${th} iteration, the average residual posterior probability of Sybils and the average residual posterior probability of benign nodes can be approximated as follows:
{\small
\begin{align}
\hat{P}_\mathcal{S}^{(t+1)}=C_\mathcal{S}\hat{P}_\mathcal{B}^{(t)}+ (1-C_\mathcal{S})\hat{P}_\mathcal{S}^{(t)}, \label{sfromto} \\
\hat{P}_\mathcal{B}^{(t+1)}=C_\mathcal{B}\hat{P}_\mathcal{S}^{(t)}+ (1-C_\mathcal{B})\hat{P}_\mathcal{B}^{(t)}. \label{bfromto}
\end{align}
}%
We take Equation~\ref{sfromto} as an example to illustrate how we derive the equations. In the considered version of SybilSCAR-D, in each iteration, a node's residual posterior probability is the average of its neighbors' residual posterior probabilities. Since we assume the attack edges are randomly established between benign nodes and Sybils, the total residual  posterior probability propagated from the benign nodes to Sybils is $g \hat{P}_\mathcal{B}^{(t)}$. Moreover, the total residual  posterior probability propagated within the Sybils is $(Vol(\mathcal{S})-g)\hat{P}_\mathcal{S}^{(t)}$, because each edge between Sybils contributes one copy of $\hat{P}_\mathcal{S}^{(t)}$ on average. Since each node takes the average of its neighbors' residual posterior probabilities, each Sybil has an average residual  posterior probability $\hat{P}_\mathcal{S}^{(t+1)}$ as $\frac{1}{|\mathcal{S}| d(\mathcal{S})}(g \hat{P}_\mathcal{B}^{(t)} + (Vol(\mathcal{S})-g)\hat{P}_\mathcal{S}^{(t)})$, which gives us Equation~\ref{sfromto}.  

Note that the derivation of Equations~\ref{sfromto} and~\ref{bfromto} is inspired by SybilRank~\cite{sybilrank}. However, since SybilSCAR-D and SybilRank use different local rules (though both are linear local rules), their exchange dynamics between benign region and Sybil region are different. This difference leads to different dynamics within benign/Sybil region and eventually leads to different security guarantees. 
 Given Equations~\ref{sfromto} and~\ref{bfromto}, we have: 
{\small 
\begin{align}
\hat{P}_\mathcal{B}^{(t+1)}-\hat{P}_\mathcal{S}^{(t+1)} = (1-C_\mathcal{B}-C_\mathcal{S})(\hat{P}_\mathcal{B}^{(t)}-\hat{P}_\mathcal{S}^{(t)}). 		
\end{align}
}%

\myparatight{Dynamics in the Sybil and benign regions} The decrease of the average residual posterior probabilities of Sybils is as follows:
{\small 
\begin{align}
&\hat{P}_\mathcal{S}^{(t+1)}-\hat{P}_\mathcal{S}^{(t)} = C_\mathcal{S}(\hat{P}_\mathcal{B}^{(t)}-\hat{P}_\mathcal{S}^{(t)}) \\
& = C_\mathcal{S} (1-C_\mathcal{B}-C_\mathcal{S})(\hat{P}_\mathcal{B}^{(t)}-\hat{P}_\mathcal{S}^{(t)}) \\
& =(1-C_\mathcal{B}-C_\mathcal{S})^{t}C_\mathcal{S}(\hat{P}_\mathcal{B}^{(0)}-\hat{P}_\mathcal{S}^{(0)}),
\end{align}
}%
where the above equation is negative (so we call it a decrease) because $(\hat{P}_\mathcal{B}^{(0)}-\hat{P}_\mathcal{S}^{(0)})$ is negative. 
Therefore, we have:
{\small
\begin{align}
&\hat{P}_\mathcal{S}^{(t)}-\hat{P}_\mathcal{S}^{(0)} =\sum_{i=0}^{t-1}(1 - C_\mathcal{B} - C_\mathcal{S})^{t} \times C_\mathcal{S} (\hat{P}_\mathcal{B}^{(0)}-\hat{P}_\mathcal{S}^{(0)}).
\end{align}
}%
Similarly, the increase of the average residual posterior probabilities of benign nodes is as follows:
{\small
\begin{align}
&\hat{P}_\mathcal{B}^{(t+1)}-\hat{P}_\mathcal{B}^{(t)} = -C_\mathcal{B}(\hat{P}_\mathcal{B}^{(t)}-\hat{P}_\mathcal{S}^{(t)}) \\
& =-(1-C_\mathcal{B}-C_\mathcal{S})^{t}C_\mathcal{B} (\hat{P}_\mathcal{B}^{(0)}-\hat{P}_\mathcal{S}^{(0)}) \\
& =-(1 - C_\mathcal{B} - C_\mathcal{S})^{t} \times C_\mathcal{B} (\hat{P}_\mathcal{B}^{(0)}-\hat{P}_\mathcal{S}^{(0)}),
\end{align}
}%
where the above equation is positive (so we call it an increase) because $(\hat{P}_\mathcal{B}^{(0)}-\hat{P}_\mathcal{S}^{(0)})$ is negative. 
Furthermore, we have:
{\small
\begin{align}
&\hat{P}_\mathcal{B}^{(t)}-\hat{P}_\mathcal{B}^{(0)} =-\sum_{i=0}^{t-1}(1 - C_\mathcal{B} - C_\mathcal{S})^{t} \times C_\mathcal{B} (\hat{P}_\mathcal{B}^{(0)}-\hat{P}_\mathcal{S}^{(0)}).
\end{align}
}%

\myparatight{Security guarantee} We assume after $\Omega=O(\log |V|)$ iterations, benign nodes have similar residual posterior probabilities, which are the average residual posterior probability of benign nodes. We note that SybilRank relies on a similar assumption, i.e., after $O(\log |V|)$ iterations, benign nodes have similar degree-normalized trust scores. 
We assume the decrease of residual posterior probabilities of Sybils all focus on $n_{\mathcal{S}}$ Sybils, which gives an upper bound of Sybils whose residual posterior probabilities are smaller than benign nodes. 
If we want these Sybils to have residual posterior probabilities that are smaller than benign nodes, then we have:
{\small
\begin{align}
0 - \frac{(\hat{P}_\mathcal{S}^{(0)}-\hat{P}_\mathcal{S}^{(\Omega)})|\mathcal{S}|}{n_{\mathcal{S}}} < \hat{P}_\mathcal{B}^{(\Omega)}\\
\iff n_{\mathcal{S}} < \frac{(\hat{P}_\mathcal{S}^{(\Omega)}-\hat{P}_\mathcal{S}^{(0)})|\mathcal{S}|}{\hat{P}_\mathcal{B}^{(\Omega)}-0} \\
\iff n_{\mathcal{S}} < \frac{(\hat{P}_\mathcal{S}^{(\Omega)}-\hat{P}_\mathcal{S}^{(0)})|\mathcal{S}|}{\hat{P}_\mathcal{B}^{(\Omega)}-\hat{P}_\mathcal{S}^{(0)}},
\end{align}
}%
where the last step holds because $\hat{P}_\mathcal{S}^{(0)}=0$.
Moreover, we have:
{\footnotesize
\begin{align}
& \frac{(\hat{P}_\mathcal{S}^{(\Omega)}-\hat{P}_\mathcal{S}^{(0)}) |\mathcal{S}|}{\hat{P}_\mathcal{B}^{(\Omega)}-\hat{P}_\mathcal{S}^{(0)}} =\frac{(\hat{P}_\mathcal{S}^{(\Omega)}-\hat{P}_\mathcal{S}^{(0)})|\mathcal{S}|}{\hat{P}_\mathcal{B}^{(\Omega)}-\hat{P}_\mathcal{B}^{(0)}+\hat{P}_\mathcal{B}^{(0)}-\hat{P}_\mathcal{S}^{(0)}} \label{2} \\
 &= \frac{\sum_{0\leq t\leq (\Omega-1)}(1-C_\mathcal{S}-C_\mathcal{B})^{t}C_\mathcal{S} (\hat{P}_\mathcal{B}^{(0)}-\hat{P}_\mathcal{S}^{(0)})|\mathcal{S}|}{(1-\sum_{0\leq t\leq (\Omega-1)}(1-C_\mathcal{S}-C_\mathcal{B})^{t}C_\mathcal{B})(\hat{P}_\mathcal{B}^{(0)}-\hat{P}_\mathcal{S}^{(0)})} \label{3} \\
 &<\frac{\sum_{0\leq t\leq (\Omega-1)}(1-C_\mathcal{S})^{t}C_\mathcal{S}|\mathcal{S}|}{(1-\sum_{0\leq t\leq (\Omega-1)}(1-C_\mathcal{S}-C_\mathcal{B})^{t}C_\mathcal{B})} \\
 &=\frac{(1-(1-C_\mathcal{S})^{\Omega})|\mathcal{S}|}{1-\frac{1-(1-C_\mathcal{S}-C_\mathcal{B})^\Omega}{C_\mathcal{S}+C_\mathcal{B}}C_\mathcal{B}} \label{5} \\
&\approx \frac{C_\mathcal{S}\Omega |\mathcal{S}|}{1-\frac{1-(1-(C_\mathcal{S}+C_\mathcal{B})\Omega+\frac{\Omega(\Omega-1)}{2}(C_\mathcal{S}+C_\mathcal{B})^2)}{C_\mathcal{S}+C_\mathcal{B}}C_\mathcal{B}} \label{6} \\
&\approx \frac{g \Omega}{d(\mathcal{S})(1-(\Omega-\frac{\Omega^2}{2}(C_\mathcal{S}+C_\mathcal{B}))C_\mathcal{B})} \label{7} \\
&< \frac{g \Omega}{d(\mathcal{S})(1-(\Omega-\frac{\Omega^2}{2}(C_\mathcal{S}+C_\mathcal{B}))(C_\mathcal{S}+C_\mathcal{B}))} \label{8} \\
&= \frac{g \Omega}{d(\mathcal{S})(\frac{1}{2}+\frac{1}{2}(\Omega(C_\mathcal{S}+C_\mathcal{B})-1)^2)} \label{9} \\
&\leq \frac{2 g \Omega}{d(\mathcal{S})}, \label{10}
\end{align}
}%
where in Equation~\ref{6}, we use the first-order and second-order Taylor expansion on nominator and denominator, respectively; 
In Equation~\ref{7}, we use $C_\mathcal{S} = \frac{g}{Vol(\mathcal{S})}$, $Vol(\mathcal{S}) = d(\mathcal{S}) |\mathcal{S}|$, and $ \Omega(\Omega-1)(C_\mathcal{S}+C_\mathcal{B})$ $\approx$ $ \Omega^2 (C_\mathcal{S}+C_\mathcal{B})$. 
By setting $\Omega=O(\log |V|)$, we have:
\begin{align}
 n_{\mathcal{S}} = O\Big(\frac{g\log |V|}{d(\mathcal{S})}\Big).
\end{align}

\end{document}